\documentclass{patmorin}
\usepackage{pat,float,subcaption}
\usepackage[T1]{fontenc}
\usepackage[utf8]{inputenc}
\usepackage{amssymb,mathtools}
\usepackage{graphicx}
\usepackage{microtype}
\usepackage[section]{placeins}
\usepackage[longnamesfirst,numbers,sort&compress]{natbib}
\usepackage[inline]{enumitem}

\newcommand{\CH}{\operatorname{CH}}
\newcommand{\diam}{\operatorname{diam}}

\newcommand{\len}{\ell}
\newcommand{\kappaGT}{\kappa}
\newcommand{\T}{T}
\newcommand{\ch}{\CH}

\title{On the Spanning Ratio of the Greedy Triangulation\\for Convex Point Sets}
\author{Prosenjit Bose\thanks{School of Computer Science, Carleton University,
Ottawa, Ontario, Canada.
Emails: \email{jit@scs.carleton.ca},
\email{bobby.miraftab@gmail.com},
\email{anil@scs.carleton.ca}, and
\email{michiel.smid@gmail.com}.} \qquad \and Jean-Lou de Carufel\thanks{School of Electrical Engineering and Computer Science,
University of Ottawa, Ottawa, Ontario, Canada.
Emails: \email{jdecaruf@uottawa.ca} and
\email{leotheocharous3@gmail.com}.} \qquad \and Anil Maheshwari\footnotemark[1] \qquad \and Bobby Miraftab\footnotemark[1] \qquad \and Michiel Smid\footnotemark[1] \qquad \and Leonidas Theocharous\footnotemark[2]}
\hypersetup{
  pdftitle={On the Spanning Ratio of the Greedy Triangulation for Convex Point Sets},
  pdfauthor={Prosenjit Bose, Jean-Lou de Carufel, Anil Maheshwari, Bobby Miraftab, Michiel Smid, Leonidas Theocharous},
  pdfsubject={Computational geometry; plane geometric spanners; greedy triangulations},
  pdfkeywords={greedy triangulation, geometric spanner, spanning ratio, convex position}
}

\date{}

\begin{document}
\maketitle

\begin{abstract}
The greedy triangulation of a finite planar point set is obtained by considering all
segments in nondecreasing order of length and inserting each segment that does not cross
an earlier one. Its spanning ratio is known to be bounded by a universal constant, but
the standard bound obtained from the diamond and good-polygon properties is about
$11739.1$. We prove a substantially smaller bound for points in convex position. In particular,
for every finite point set $P\subset\R^2$ in convex position and every pair
$u,v\in P$, the greedy triangulation contains a $u$--$v$ path of length at most
$\kappaGT |uv|$, where $\kappaGT<17.814$. Thus, the greedy triangulation of a
convex point set is an $18$-spanner.
\end{abstract}

\section{Introduction}

A geometric graph on a finite point set $P\subset\R^2$ is a \emph{$t$-spanner} if every
pair of vertices is joined by a path whose length is at most $t$ times their Euclidean
distance. The least admissible value of $t$ is the graph's \emph{spanning ratio}, also
called its stretch factor or dilation. Geometric spanners provide sparse networks that
approximately preserve the metric of the complete Euclidean graph and are a central
object in computational geometry; see the survey of Bose and Smid~\cite{BoseSmid2013}.

The \emph{greedy triangulation} processes all segments determined by $P$ in
nondecreasing order of length and retains a segment exactly when it does not cross a
previously retained segment. The resulting graph is plane and maximal, hence a
triangulation. Das and Joseph~\cite{DasJoseph1989} proved that greedy triangulations have
bounded spanning ratio. Greedy triangulations have also been studied as approximations
to minimum-weight triangulations and from an algorithmic perspective
\cite{LevcopoulosLingas1987,LevcopoulosLingas1992}.

For completeness, we recall the two local properties underlying the general spanning-ratio
bound. Fix $0<\alpha<\pi/2$ and an edge $xy$. On the two sides of the line through $xy$,
let $D^+_\alpha(xy)$ and $D^-_\alpha(xy)$ be the open isosceles triangles having $xy$ as
their common base and base angles $\alpha$. The edge $xy$ has the
\emph{$\alpha$-diamond property} if at least one of these two triangles contains no point
of $P$; a plane geometric graph has the $\alpha$-diamond property if each of its edges
does. Thus every graph edge is incident to an empty angular region of a fixed shape.

The second condition controls detours around faces. Two vertices $a,b$ on the boundary of
a face $f$ are \emph{visible through $f$} when the segment $ab$ is contained in the
closure of $f$. A plane geometric graph has the \emph{$d$-good-polygon property} if, for
every face $f$ and every such visible pair $a,b$, the shorter of the two boundary paths
from $a$ to $b$ has length at most $d|ab|$. Every triangulation has the
$1$-good-polygon property: in a bounded triangular face a visible pair is joined by a
side, and on the outer face the visible pairs are adjacent hull vertices. The diamond
property supplies empty regions that constrain how a route can progress, while the
good-polygon property bounds the cost of replacing portions of that route by walks along
face boundaries. These notions were introduced by Das and Joseph and sharpened in the
generalized diamond-spanner framework of Bose, Lee, and Smid
\cite{DasJoseph1989,BoseLeeSmid2007}.

The greedy triangulation has the $\pi/6$-diamond property, and, being a triangulation, it
has the $1$-good-polygon property. The estimate of Bose, Lee, and
Smid~\cite{BoseLeeSmid2007} therefore gives, for arbitrary planar point sets,
\[
  \frac{8d(\pi-\alpha)^2}{\alpha^2\sin^2(\alpha/4)}
  =\frac{8(\pi-\pi/6)^2}{(\pi/6)^2\sin^2(\pi/24)}
  \approx 11739.1.
\]
This bound is evidently far from the behavior suggested by known examples. The best known lower bound is given by a configuration of six points in convex position, for which the greedy triangulation has a spanning ratio of $2.0268$ \cite{DumitrescuGhosh2016}.

A sharper undirected bound also has consequences for oriented geometric
spanners. Buchin et al.~\cite{BuchinEtAl2026} showed that a consistent orientation
of the greedy triangulation has oriented dilation at most $7.2t_g$, where $t_g$
is an upper bound on its undirected spanning ratio. Thus, every improvement in
the undirected bound for convex point sets immediately yields a corresponding
improvement in the oriented setting.

In this paper, we study the greedy triangulation for convex point sets and obtain the following result.

\begin{thm}[Spanning ratio]\label{thm:main}
Let $P\subset\R^2$ be a finite point set in convex position, and let $T$ be its greedy
triangulation. Then, for all $u,v\in P$,
\[
  d_T(u,v)\le \kappaGT |uv|,
  \qquad
  \kappaGT
  =6+\frac{72}{17}+\sqrt{2}+2\sqrt{9+\frac{144}{289}}
  <17.814.
\]
In particular, the greedy triangulation of $P$ is an $18$-spanner.
\end{thm}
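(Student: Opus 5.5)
We would prove the bound by strong induction on the rank of the pair $\{u,v\}$ in the greedy order, that is, on $|uv|$. The only structure we use is the defining property of the greedy triangulation: if $uv\notin T$, then $uv$ crosses some edge $xy\in T$ with $|xy|\le|uv|$. Convex position gives a cyclic order on $P$ with these consequences:
\begin{itemize}
\item every chord of $\CH(P)$ separates $P$ into two arcs;
\item an edge $xy$ crossing $uv$ has exactly one endpoint on each of the two arcs of the boundary between $u$ and $v$;
\item the edges of $T$ crossing $uv$ are linearly ordered along $uv$.
\end{itemize}

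\textbf{Step 1 (a short crossing edge close to one endpoint).} Among the edges of $T$ crossing $uv$, we take $xy$ with $|xy|\le|uv|$, choosing it extremal in a suitable sense, for example the crossing edge of $T$ closest to $u$ along $uv$. Write $x$ on one side and $y$ on the other. We would show that one of $x,y$, say $x$, satisfies
\[
|ux|\le c_1|uv|\quad\text{and}\quad |xv|\le c_2|uv|\qquad\text{with } c_1,c_2<1.
\]
If both endpoints of the short edge lay near the far portions of the lens, the quadrilateral $u,x,v,y$ in convex position would contain a shorter non-crossing diagonal that was inserted earlier, contradicting the choice of $xy$. We expect this to use the $\pi/6$-diamond property of $T$, which guarantees an empty isosceles triangle on one side of $xy$.

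\textbf{Step 2 (the recursion).} Applying the induction hypothesis to the pairs $(u,x)$ and $(x,v)$ gives $d_T(u,v)\le\kappa(c_1+c_2)|uv|$. This is useless when $c_1+c_2\ge1$, which is typical, so a naive split does not close. Instead we would split only once on a ``cheap'' side. We walk from $u$ along $T$ to $x$ through a geometrically controlled path: either a direct edge, or a short chain of convex-hull/fan edges bounded by a constant times $|uv|$ via the $1$-good-polygon property. We then recurse on a single strictly shorter pair $(x,v)$.

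The resulting inequality has the form
\[
\kappa\ge A+\kappa\,\lambda,
\]
or, when two cases are combined, a quadratic $\kappa\ge a+b\kappa^{1/2}+\dots$. The shape of the constant
\[
6+\tfrac{72}{17}+\sqrt2+2\sqrt{9+144/289}
\]
(note $\sqrt{9+144/289}=\sqrt{3^2+(12/17)^2}$) suggests that the final optimization involves a hypotenuse of legs $3$ and $12/17$. This would come from bounding a detour whose horizontal and vertical components are $3|uv|$ and $\tfrac{12}{17}|uv|$, together with a $\sqrt2$ term coming from a right-isosceles worst case. We would therefore place $uv$ horizontally and bound the coordinates of $x,y$ inside a box determined by the greedy condition $|xy|\le|uv|$ and convexity. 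The constants $72/17=6\cdot 12/17$ indicate a sum of six vertical offsets.

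\textbf{Step 3 (assembling the cases).} We separate cases by whether the relevant endpoint of the crossing edge lies within distance $|uv|/\sqrt2$ of $u$ or of $v$, or in the middle. In each case we bound the path explicitly and check that it is at most $\kappa|uv|$.

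\textbf{Main obstacle.} The hardest step is Step 1's geometric lemma: we must show that some crossing edge endpoint is close enough to an endpoint of $uv$, with the precise constants $3$ and $12/17$, that the single-branch recursion contracts. We would attack it first by extremality. Take the crossing edge nearest to $u$ along $uv$. Every segment from $u$ to the points between it and $u$ does not cross $T$ there, so we can bound its length. The hope is that this argument reproduces the box dimensions above.
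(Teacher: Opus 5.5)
\textbf{There is a genuine gap.} Your argument is an outline whose key step is false as stated, and the recursion it feeds cannot close.

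Step~1 asserts that some endpoint of a short crossing edge lies in the lens $\{z:|uz|<|uv|,\ |vz|<|uv|\}$. Take $u=(0,0)$, $v=(1,0)$, $x=(1,0.1)$, $y=(0.98,-0.2)$. These four points are in convex position. The greedy algorithm inserts $xv$, $yv$ and $xy$ (length $\approx 0.30$) before it considers $uv$, so $xy$ blocks $uv$ and is the only edge of $T$ crossing it. Yet $|ux|\approx 1.005$ and $|uy|\approx 1.0002$ both exceed $|uv|$. So neither $(x,u)$ nor $(y,u)$ is a shorter pair, and your single-branch recursion has nothing to recurse on, whichever of $u,v$ you call the near endpoint.

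Even when a point $x$ of the lens exists, the inequality $d_T(u,x)+\kappa|xv|\le\kappa|uv|$ needs two things: $|xv|\le\lambda|uv|$ for a fixed $\lambda<1$, and an a priori bound on $d_T(u,x)$. You supply neither. The $1$-good-polygon property of a triangulation only says that visible pairs on a face are adjacent, so it cannot bound the walk from $u$ to $x$. Finally, your constant is reverse-engineered from its shape. It is not of the fixed-point form $A/(1-\lambda)$ that such a recursion would produce.

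The paper uses no induction and no diamond property. It proceeds in stages:
\begin{itemize}
\item If $\angle u\le 2\pi/3$ or $\angle v\le 2\pi/3$, a hull path has length $<4|uv|$ (Lemma~\ref{lem:bounded-angle}).
\item If, for one hull path, the lines through its first and last edges meet at an angle at least $2\arctan(1/17)$, a hull path has length at most $\sqrt{290}\,|uv|$.
\item Otherwise, in a suitable frame, the hull near $uv$ lies in $Q_2=[-3L,3L]\times[-\frac{12}{17}L,\frac{12}{17}L]$ and splits into two $x$-monotone chains $A\ni u$ and $B\ni v$.
\end{itemize}
The key lemma is that $T$ contains a \emph{bridge} from $A$ to $B$ inside $Q_2$. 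Assuming not, sweeps of blockers from the left and from the right produce two edges of $T$. Each enters $Q_1$ from outside $Q_2$, and they end at points $x\in A$ and $y\in B$ of $Q_1$. Then no blocker of $xy$ can exist: it would be a bridge, cross one of those two edges, or have length greater than $\frac{11}{5}L>|xy|$.

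The bound is then direct. Two hull subchains have total length at most $(6+\frac{72}{17}+\sqrt2)L$, and the bridge has length at most $\diam(Q_2)=2\sqrt{9+\frac{144}{289}}\,L$. This localization-and-bridge argument is the idea your plan lacks.
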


The proof has two stages. First, unless one of the two hull paths between the chosen
vertices $u$ and $v$ is already short, we show that the portion of the convex hull in a
carefully chosen neighborhood of $uv$ consists of two $x$-monotone convex chains, one
containing $u$ and the other containing $v$. Second, we prove that the greedy
triangulation contains an edge joining these two chains inside that neighborhood; we call
such an edge a \emph{bridge}. Combining the bridge with suitable subpaths of the two hull
chains yields a $u$--$v$ path of the desired length. A detailed overview of the argument is
given in Section~\ref{sec:overview}.

Although Theorem 1 substantially improves the general upper bound in the convex-position setting, a considerable gap remains between our upper bound of 18 and the best known lower bound of 2.0268. Closing this gap is an interesting direction for future work.

\section{Preliminaries}\label{sec:prelim}

Throughout the paper, $P\subset\R^2$ is a finite point set in convex position. For
simplicity, assume that no three points of $P$ are collinear and that no two candidate
edges have the same length. Standard symbolic perturbation, or a fixed deterministic
tie-breaking rule, removes the latter assumption without changing the arguments.
For a point $p\in\R^2$, write $p=(x_p,y_p)$.

Let $\CH(P)$ denote the convex hull of $P$, and let $T=(P,E)$ be the greedy triangulation
of $P$. Thus the greedy algorithm considers all segments with endpoints in $P$ in
nondecreasing order of length and inserts a segment precisely when it does not cross any
previously inserted segment. Here and below, two segments \emph{cross} when their
relative interiors intersect. Every hull edge belongs to $T$.

For a vertex $v\in P$, let $\angle v$ denote the interior angle of $\CH(P)$ at $v$. For a
polygonal path $\pi=(v_0,\ldots,v_k)$, define
\[
  \len(\pi)=\sum_{i=1}^{k}|v_{i-1}v_i|.
\]
We write $d_T(u,v)$ for the shortest-path distance between $u$ and $v$ in $T$. The
spanning ratio of $T$ is
\[
  \max_{\{u,v\}\subseteq P}\frac{d_T(u,v)}{|uv|}.
\]
We use the same notation $uv$ for the segment joining $u$ and $v$ and, when that segment
belongs to a graph, for the corresponding edge.

We begin with two elementary geometric estimates.

\begin{obs}\label{obs:angle}\label{obs:triangle}
Let $a,b,x\in\R^2$, and suppose that the angle at $x$ in the triangle $axb$ is at least
$\varphi$, where $0<\varphi<\pi$. Then
\[
  |ax|+|xb|\le \frac{|ab|}{\sin(\varphi/2)}.
\]
\end{obs}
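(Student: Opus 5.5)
We prove the inequality directly with the law of sines. Write $\gamma$ for the angle at $x$ in triangle $axb$, and $\alpha,\beta$ for the angles at $a$ and $b$, so that $\alpha+\beta=\pi-\gamma$. If the triangle is degenerate the claim is immediate. If $x$ lies on segment $ab$, then $|ax|+|xb|=|ab|$ and $\sin(\varphi/2)\le 1$. Otherwise $\gamma=0$ cannot occur, because $\gamma\ge\varphi>0$. So we may assume the triangle is nondegenerate with $\varphi\le\gamma<\pi$.

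The law of sines gives $|ax|=|ab|\sin\beta/\sin\gamma$ and $|xb|=|ab|\sin\alpha/\sin\gamma$. Hence
\[
  |ax|+|xb|=|ab|\,\frac{\sin\alpha+\sin\beta}{\sin\gamma}
  =|ab|\,\frac{2\sin\frac{\alpha+\beta}{2}\cos\frac{\alpha-\beta}{2}}{2\sin\frac{\gamma}{2}\cos\frac{\gamma}{2}}.
\]
Since $\alpha+\beta=\pi-\gamma$, we have $\sin\frac{\alpha+\beta}{2}=\cos\frac{\gamma}{2}$, and this factor cancels the one in the denominator. We are left with
\[
  |ax|+|xb|=|ab|\,\frac{\cos\frac{\alpha-\beta}{2}}{\sin\frac{\gamma}{2}}.
\]

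Two bounds finish the proof. First, $\cos\frac{\alpha-\beta}{2}\le 1$. Second, $\gamma/2$ and $\varphi/2$ both lie in $(0,\pi/2)$, where sine is increasing, so $\gamma\ge\varphi$ gives $\sin(\gamma/2)\ge\sin(\varphi/2)>0$. Combining these yields $|ax|+|xb|\le |ab|/\sin(\varphi/2)$.

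The only step needing care is the trigonometric simplification and the handling of degenerate cases. An alternative route is to note that $x$ lies on the arc of points seeing $ab$ at angle exactly $\gamma$, where the sum $|ax|+|xb|$ is maximized at the apex of the isosceles triangle. That maximum equals $|ab|/\sin(\gamma/2)$, which gives the same conclusion.
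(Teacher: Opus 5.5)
Your proof is correct, but it takes a different route from the paper. You use the law of sines and a sum-to-product identity to get the exact (Mollweide-type) formula $|ax|+|xb|=|ab|\cos\frac{\alpha-\beta}{2}/\sin\frac{\gamma}{2}$, and then drop the cosine factor.

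The paper instead writes the cosine rule as $|ab|^2=(A-B)^2+4AB\sin^2(\theta/2)$. It bounds this below by $(A+B)^2\sin^2(\theta/2)$ using $\sin^2(\theta/2)\le 1$.

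The paper's version is a little more economical. It never divides by $\sin\theta$ and never uses the angle sum $\alpha+\beta=\pi-\gamma$. So the collinear case $\theta=\pi$ (with $x$ on segment $ab$) is handled by the same computation, and no separate degenerate-case discussion is needed.

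Your version needs a genuine triangle, which is why you had to split off the degenerate cases. In exchange, it gives an identity rather than an inequality. This makes it immediate that the bound is attained exactly for the isosceles triangle with apex angle $\varphi$.

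Both arguments give up the same slack: the asymmetry between the two legs ($\alpha\neq\beta$, equivalently $A\neq B$), plus the gap between the actual angle and $\varphi$.
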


\begin{proof}
Set $A=|ax|$, $B=|xb|$, and let $\theta=\angle axb\ge\varphi$. The cosine rule gives
\[
 |ab|^2=(A-B)^2+4AB\sin^2(\theta/2)
 \ge (A+B)^2\sin^2(\theta/2).
\]
Since $\sin(\theta/2)\ge\sin(\varphi/2)$, the claim follows.
\end{proof}

\begin{obs}\label{obs:chain-triangle}\label{obs:convex-chain}
Let $\gamma$ be a convex polygonal chain from $p$ to $q$, and let $x$ be a point such that
$\gamma$ lies in the triangle $pxq$. Then
\[
  \len(\gamma)\le |px|+|xq|.
\]
\end{obs}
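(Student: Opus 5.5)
This is the classical fact that among nested convex curves with common endpoints, the outer one is longer. I would prove it directly by a projection argument.

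The plan is to replace $|px|+|xq|$ by the perimeter of a convex region and then use monotonicity of perimeter under inclusion. Let $K$ be the convex hull of $\gamma$. Since $\gamma$ is a convex chain from $p$ to $q$, $K$ is the convex polygon bounded by $\gamma$ together with the segment $qp$. Hence
\[
  \operatorname{per}(K)=\len(\gamma)+|pq|.
\]
Since $\gamma$ lies in the triangle $pxq$, which is convex, we have $K\subseteq pxq$, and
\[
  \operatorname{per}(pxq)=|px|+|xq|+|pq|.
\]
So it suffices to show that $\operatorname{per}(K)\le\operatorname{per}(pxq)$ for nested compact convex sets $K\subseteq pxq$. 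Cancelling $|pq|$ from both sides then gives the claim.

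The key step, and the only one needing justification, is this monotonicity of perimeter. I would prove it by iterated cutting. Take any edge $e$ of $K$ and extend it to the supporting line $L$ of $K$. Cut the current convex polygon $Q$ (initially the triangle $pxq$) along $L$, and keep the side containing $K$. The part of $\partial Q$ that is removed is a polygonal path joining the two points where $L$ meets $\partial Q$. By the triangle inequality this path is at least as long as the chord along $L$ that replaces it. So the perimeter does not increase, and the new polygon still contains $K$. After one cut per edge of $K$, the polygon equals $K$, which gives $\operatorname{per}(K)\le\operatorname{per}(pxq)$.

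Alternatively, one can use Cauchy's formula: the perimeter equals the integral over directions of the width, and the width is monotone under inclusion. The cutting argument is more elementary, so I would use that.

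A small degenerate case needs care: when $\gamma$ is the single segment $pq$, $K$ is a segment. The inequality is then just the triangle inequality $|pq|\le|px|+|xq|$, so the argument is unaffected.
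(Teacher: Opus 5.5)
Your proposal is correct and follows the same route as the paper: the convex polygon bounded by $\gamma$ and the chord $pq$ lies in the triangle $pxq$, perimeter is monotone under inclusion, and you cancel $|pq|$. The only addition is your iterated-cutting proof of perimeter monotonicity, which the paper takes as a standard fact.
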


\begin{proof}
The chain $\gamma$ together with the chord $pq$ bounds a convex polygon contained in the
triangle $pxq$. Perimeter is monotone under inclusion for planar convex sets, so
$\len(\gamma)+|pq|\le |px|+|xq|+|pq|$.
\end{proof}

We also use the following standard property of the greedy triangulation.

\begin{obs}\label{obs:blocker}
Let $ab$ be a segment with endpoints in $P$. If $ab\notin E(\T)$, then there is an
edge $pq\in E(\T)$ such that $pq$ crosses $ab$ and $|pq|\le |ab|$.
We call such an edge a \emph{blocker} of $ab$.
\end{obs}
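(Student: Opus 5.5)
The argument works directly from the greedy insertion rule, looking at the moment the algorithm processes $ab$. Since $ab\notin E(\T)$, the segment $ab$ was rejected, so I would fix the moment the algorithm considered $ab$ and show it was rejected because of a specific, already inserted edge.

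First, the rejection rule says that $ab$ is discarded exactly when it crosses some segment inserted earlier. Let $pq$ be such a segment. It was inserted before $ab$ was considered. Segments are processed in nondecreasing order of length, and we assumed distinct lengths, or equivalently a fixed tie-breaking order. Hence $|pq|\le|ab|$, with strict inequality under the distinct-length assumption.

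Second, I would check that $pq$ is actually an edge of the final triangulation. The greedy algorithm never deletes a segment once inserted, so the edge set only grows. Therefore $pq\in E(\T)$, and by the choice of $pq$, the edge $pq$ crosses $ab$ in the sense of relative interiors intersecting.

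The only subtle point is the precise meaning of "rejected": $ab$ could fail to be inserted only because it crosses a previously retained segment, and retained segments are never removed. Both facts are immediate from the definition of the algorithm given in the preliminaries, so there is no real obstacle. Ties in length are handled by the tie-breaking rule, which still places $pq$ no later than $ab$ with $|pq|\le|ab|$.
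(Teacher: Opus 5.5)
Your proof is correct and matches the paper's argument: when $ab$ is considered, it is rejected because it crosses an already inserted edge $pq$, and the length ordering gives $|pq|\le|ab|$. Your explicit remark that inserted edges are never removed, so that $pq\in E(\T)$, is a step the paper leaves implicit.
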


\begin{proof}
When the greedy algorithm considers the segment $ab$, it inserts $ab$ unless $ab$
crosses an already inserted edge. Thus, if $ab$ is not inserted, then at that moment
there is an already inserted edge $pq$ crossing $ab$. Since the greedy algorithm
processes edges in nondecreasing order of length, $|pq|\le |ab|$.
\end{proof}

\section{Proof Overview}\label{sec:overview}

Fix $u,v\in P$ and set $L=|uv|$.
If one of $u$ or $v$ has hull angle at most $2\pi/3$, Lemma~\ref{lem:bounded-angle} gives a hull path of length less than $4L$.
Thus the hard case is when both endpoint angles are large.

In that case we place two concentric rectangles around the midpoint of $uv$.
The larger rectangle $Q_2$ has dimensions $6L\times\frac{24}{17}L$, and the smaller rectangle $Q_1$ has the same height and dimensions $\frac85L\times\frac{24}{17}L$.
A preliminary reduction shows that, unless a hull path already has length at most
$\kappaGT L$, the portion of $\partial\ch(P)$ inside $Q_2$ consists of two
$x$-monotone convex chains, denoted $A$ and $B$, with $u\in A$ and $v\in B$.
Refer to Figure~\ref{fig:bridge-path}(i) for an illustration.

At this point the target becomes a \emph{bridge}: an edge of $\T$ with one endpoint on $A$ and one endpoint on $B$, both endpoints lying in $Q_2$.
This is enough because a bridge $ab$ immediately gives the path
\[
 u \leadsto_A a,\quad ab,\quad b\leadsto_B v.
\]
The two subpaths are contained in the controlled rectangle and the bridge has
length at most $\diam(Q_2)$; Lemma~\ref{lem:bridge-short-path} checks that the
resulting path has length at most $\kappaGT L$. This path is shown in
Figure~\ref{fig:bridge-path}(ii).

\begin{figure}
\begin{center}
\includegraphics{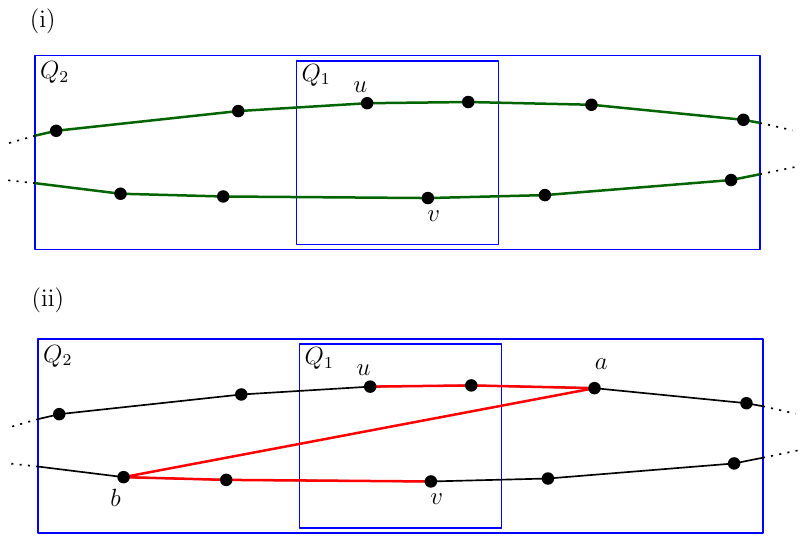}
\end{center}
\caption{(i) The rectangles $Q_1$ and $Q_2$ are drawn in blue, with $Q_1$ slightly displaced from $\partial Q_2$ for the purpose of illustration. We reduce to the case where $\partial CH(P) \cap Q_2$ consists of two $x$-monotone convex chains as shown: an upper chain $A$ with $u\in A$, and a lower chain $B$ with $v\in B$. Moreover, these two chains only cross the short sides of $Q_2$. (ii) The existence of a bridge in $E(\T)$ implies a short path between $u$ and $v$. Here, the path is drawn in red.}
 \label{fig:bridge-path}
\end{figure}

It remains only to force the bridge.
Assume no bridge exists.
Starting from the left side of $Q_2$, we sweep from left to right through segments joining $A$ to $B$.
Each such segment is absent from $\T$, so the blocker property (see Observation~\ref{obs:blocker}) supplies a greedy edge crossing it and no longer than it.
The sweep produces an edge of $\T$ entering $Q_1$ from the left of $Q_2$ and ending at a point $x$ of one chain inside $Q_1$.
A complementary sweep from the right produces an edge entering $Q_1$ from the right and ending at a point $y$ of the opposite chain. Refer to Figure~\ref{fig:contradiction-one} for an illustration of the two obtained edges.
The segment joining $x$ and $y$ lies in $Q_1$ and has length less than $\diam(Q_1)$.
A blocker of this short segment cannot stay inside $Q_2$ without either being the forbidden bridge or crossing one of the two already constructed edges as illustrated in Figure~\ref{fig:contradiction-two}(i). It also cannot leave $Q_2$, because then it is longer than $\diam(Q_1)$, as indicated in Figure~\ref{fig:contradiction-two}(ii).
This contradiction proves the existence of a bridge.

\begin{figure}
\begin{center}
\includegraphics{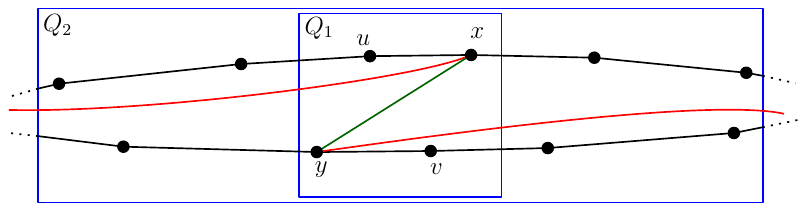}
\end{center}
\caption{Assuming that no bridge exists, we force the existence of the two shown red edges; they both have an endpoint outside $Q_2$, while the two remaining endpoints lie on opposite chains within $Q_1$.}
 \label{fig:contradiction-one}
\end{figure}

\begin{figure}
\begin{center}
\includegraphics{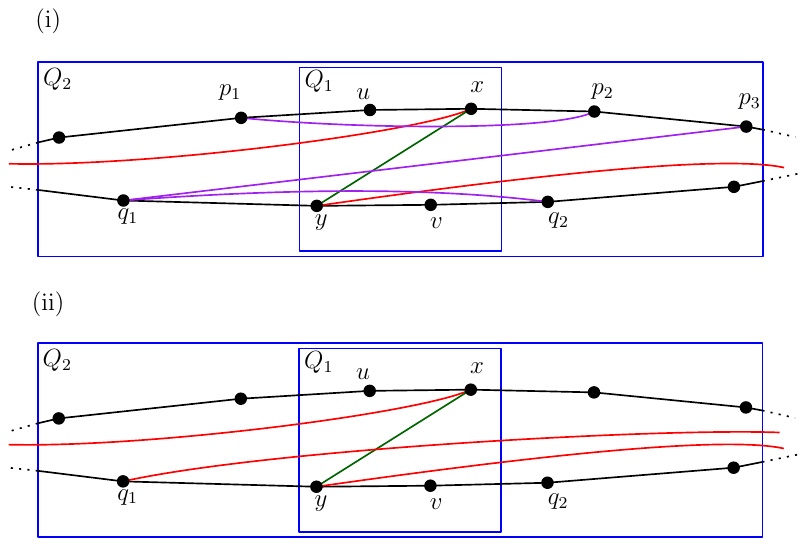}
\end{center}
\caption{If $xy$ is not in $E(\T)$, it must be blocked by a shorter edge. (i) If the edge that blocks $xy$ remains in $Q_2$, then it is either a bridge (this is the case for the edge $q_1p_3$ in the figure), and if not, it must cross one of the two red edges, violating planarity (this is the case for the edges $p_1p_2$ and $q_1q_2$ in the figure). (ii) If the edge that blocks $xy$ exits $Q_2$ (this is the case for the red edge starting at $q_1$), then we show that any such edge must be longer than $xy$, leading to a contradiction.}
 \label{fig:contradiction-two}
\end{figure}

\section{Bounding the Spanning Ratio}\label{sec:bound}

The following lemma handles the case in which one endpoint has relatively small
interior angle.

\begin{lem}\label{lem:bounded-angle}
Let $u\in P$ and suppose that $\angle u\le \theta<\pi$. Then, for every $v\in P$,
at least one of the two hull paths from $u$ to $v$ has length at most
\[
\frac{1}{\sin((\pi-\theta)/4)}\,|uv|.
\]
In particular,
\[
d_\T(u,v)\le \frac{1}{\sin((\pi-\theta)/4)}\,|uv|.
\]
\end{lem}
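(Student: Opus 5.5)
The plan is to cut $\CH(P)$ along the chord $uv$ and show that, on one of the two sides, the hull path from $u$ to $v$ lies inside a triangle $uxv$ whose apex angle at $x$ is at least $(\pi-\theta)/2$. Observation~\ref{obs:chain-triangle} and Observation~\ref{obs:angle} then finish the argument. If $uv$ is itself a hull edge, the bound is trivial, since $\sin(\cdot)\le 1$. So assume $uv$ is a proper chord. It splits $\CH(P)$ into two convex polygons $C_1,C_2$. Each $C_i$ is bounded by $uv$ together with the hull path $\gamma_i$ from $u$ to $v$ on that side.

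For $i=1,2$, let $\alpha_i$ be the angle of $C_i$ at $u$, between $uv$ and the hull edge of $\gamma_i$ at $u$. Let $\beta_i$ be the angle of $C_i$ at $v$. Then $\alpha_1+\alpha_2=\angle u\le\theta$ and $\beta_1+\beta_2=\angle v<\pi$, because $P$ is in convex position with no three points collinear. Summing gives
\[
(\alpha_1+\beta_1)+(\alpha_2+\beta_2)<\theta+\pi .
\]
Hence, for some side $i$, we have $\alpha_i+\beta_i<(\pi+\theta)/2<\pi$. This pigeonhole choice is the key step. Choosing the side only by the angle at $u$ would not suffice, because the angle at $v$ on that side could be close to $\pi$.

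Fix this side $i$. Consider the supporting lines of the first edge of $\gamma_i$ (at $u$) and of the last edge of $\gamma_i$ (at $v$). Since $\alpha_i+\beta_i<\pi$, these two rays, drawn on the $C_i$ side of line $uv$, meet at a point $x$. The triangle $uxv$ has angle
\[
\pi-\alpha_i-\beta_i>\pi-\frac{\pi+\theta}{2}=\frac{\pi-\theta}{2}
\]
at $x$. The polygon $C_i$ is convex. It therefore lies in the half-plane of each of its edge lines: the line through $uv$, the line through the first edge of $\gamma_i$, and the line through the last edge of $\gamma_i$. The intersection of these three half-planes is exactly the triangle $uxv$, so $\gamma_i\subseteq uxv$. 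When $\gamma_i$ has a single intermediate vertex, that vertex is $x$ itself, and the argument still applies.

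Observation~\ref{obs:chain-triangle} now gives $\len(\gamma_i)\le|ux|+|xv|$. Observation~\ref{obs:angle} with $\varphi=(\pi-\theta)/2$ then gives
\[
\len(\gamma_i)\le \frac{|uv|}{\sin((\pi-\theta)/4)} .
\]
Every hull edge belongs to $T$, so $\gamma_i$ is a path in $T$, and the bound on $d_T(u,v)$ follows. The only step needing care is checking that the three half-planes really cut out the triangle $uxv$, that is, that $x$ lies on the $C_i$ side of $uv$. This is immediate from $\alpha_i,\beta_i>0$ and $\alpha_i+\beta_i<\pi$.
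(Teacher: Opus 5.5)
Your proof is correct, but it takes a different route from the paper's.

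\textbf{The paper's route.} It works from $u$ alone. It takes the wedge at $u$ spanned by the hull edges $uu_1,uu_2$ and chooses one supporting line of $\CH(P)$ through $v$ that meets both rays of this wedge, at $v_1$ and $v_2$. The base angles of the triangle $uv_1v_2$ sum to $\pi-\angle u\ge\pi-\theta$. The paper then follows the hull path on the side of the larger base angle; that path lies in $uv_1v$.

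\textbf{Your route.} You cut along the chord $uv$ and, on each side, use that side's own edge line at $v$. You then pigeonhole over $\alpha_i+\beta_i$, paying for the second endpoint with the free inequality $\angle v<\pi$. The resulting apex bound $(\pi-\theta)/2$ is the same as the paper's.

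\textbf{What each buys.} The paper's version uses only the angle at $u$ and a single line through $v$. However, it tacitly assumes that a supporting line through $v$ meeting both rays exists, and this can fail when $v$ is near $u$ along the hull. Take a regular $12$-gon with vertices $p_0,\dots,p_{11}$, $u=p_0$ and $v=p_2$. Every supporting line through $v$ misses the ray from $u$ through $p_{11}$, so that case needs a separate argument. Your construction has no such existence issue. On the chosen side, $\alpha_i,\beta_i>0$ and $\alpha_i+\beta_i<\pi$ force $x$ to exist on the correct side of $uv$, so the three half-planes cut out exactly the triangle $uxv$. Moreover, your apex $x$ is exactly the point $x(H)=\lambda_u(H)\cap\lambda_v(H)$ that the paper itself uses later in Lemma~\ref{lem:rectangle}. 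Your argument therefore lets both steps share one construction.
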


\begin{proof}
If $v$ is a neighbor of $u$ on $\ch(P)$, then $uv$ is a hull edge, hence an edge of
$\T$, and the claim is immediate.

Assume now that $v$ is not a neighbor of $u$. Let $u_1$ and $u_2$ be the two neighbors
of $u$ on $\ch(P)$. Since $P$ is in convex position, the whole point set lies in the
wedge bounded by the rays from $u$ through $u_1$ and $u_2$.

Let $\ell$ be a supporting line of $\ch(P)$ through $v$, chosen so that it intersects both boundary rays of the wedge at $u$ and is not parallel to $uu_1$ or $uu_2$. This line intersects the rays through
$uu_1$ and $uu_2$ at points $v_1$ and $v_2$, respectively. Then $u,v_1,v_2$ form a
triangle containing $v$ on the side $v_1v_2$, and the angle at $u$ in this triangle is
$\angle u$. Hence
\[
\angle uv_1v_2+\angle uv_2v_1=\pi-\angle u\ge \pi-\theta.
\]
Thus at least one of these two angles is at least $(\pi-\theta)/2$. Without loss of
generality, assume that $\angle uv_1v_2\ge(\pi-\theta)/2$. Since $v,v_1,v_2$ are
collinear, we also have $\angle uv_1v\ge(\pi-\theta)/2$.

Now consider the hull path from $u$ to $v$ that starts with the edge $uu_1$. This path
is a convex polygonal chain contained in the triangle $uv_1v$. By
Observation~\ref{obs:convex-chain}, its length is at most $|uv_1|+|v_1v|$. Applying
Observation~\ref{obs:triangle} to the triangle $uv_1v$, with
$\varphi=\angle uv_1v\ge(\pi-\theta)/2$, gives
\[
|uv_1|+|v_1v|
\le
\frac{1}{\sin((\pi-\theta)/4)}\,|uv|.
\]
Thus one hull path from $u$ to $v$ has length at most
$\frac{1}{\sin((\pi-\theta)/4)}|uv|$. Since every hull edge belongs to $\T$, this hull
path is a path in $\T$, proving the claim.
\end{proof}

\begin{figure}
\begin{center}
\includegraphics{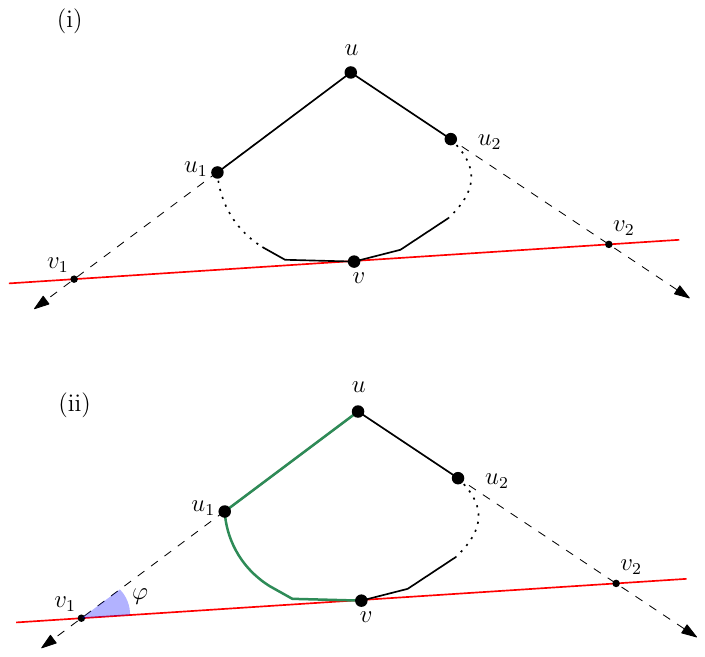}
\end{center}
\caption{ Illustration for the proof of Lemma~\ref{lem:bounded-angle}. (i) $\ch(P)$ is contained in the wedge bounded by the rays from $u$ through $u_1$ and $u_2$. The red line through $v$ supports $\ch(P)$ and intersects the two rays at the points $v_1$ and $v_2$. (ii) The green path follows the convex hull from $u$ to $v$ through $u_1$, and is contained in the triangle $uv_1v$. Therefore, it is shorter than $|uv_1|+|v_1v|$. The lower bound on the angle $\varphi$ combined with Observation~\ref{obs:triangle}, implies the lemma. }
\label{fig:bounded-angle}
\end{figure}

The following corollary is obtained by applying Lemma~\ref{lem:bounded-angle} with $\theta = 2\pi/3$.

\begin{cor}\label{cor:bounded-angle}
Let $u \in P$ and suppose that $\angle u \le 2\pi/3$. Then, for every $v \in P$,
\[
d_{\T}(u,v) \le \frac{1}{\sin(\pi/12)} |uv|
= (\sqrt2+\sqrt6)|uv|.
\]
In particular,
$d_{\T}(u,v) < 4 |uv|.$
\end{cor}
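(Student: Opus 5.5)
This follows by specializing Lemma~\ref{lem:bounded-angle} to $\theta=2\pi/3$ and evaluating the constant exactly.

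\emph{Applying the lemma.} The hypothesis $\angle u\le 2\pi/3<\pi$ meets the requirement $\angle u\le\theta<\pi$ with $\theta=2\pi/3$. For every $v\in P$, Lemma~\ref{lem:bounded-angle} then gives a hull path from $u$ to $v$ of length at most
\[
\frac{|uv|}{\sin\bigl((\pi-2\pi/3)/4\bigr)}=\frac{|uv|}{\sin(\pi/12)}.
\]
Every hull edge lies in $\T$, so the same bound holds for $d_\T(u,v)$.

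\emph{Evaluating the constant.} The half-angle or difference formula gives
\[
\sin(\pi/12)=\sin(\pi/3-\pi/4)=\frac{\sqrt6-\sqrt2}{4}.
\]
Rationalizing,
\[
\frac{1}{\sin(\pi/12)}=\frac{4}{\sqrt6-\sqrt2}=\frac{4(\sqrt6+\sqrt2)}{6-2}=\sqrt6+\sqrt2.
\]

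\emph{The strict bound by $4$.} We have $\sqrt2+\sqrt6<4$: both sides are positive, and squaring gives $8+2\sqrt{12}<16$, which is equivalent to $\sqrt{12}<4$, i.e.\ $12<16$. Numerically, $\sqrt2+\sqrt6\approx 3.8637$.

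The strict inequality $d_\T(u,v)<4|uv|$ uses $|uv|>0$, which holds for $u\ne v$; the case $u=v$ is trivial. None of these steps poses a real obstacle. The only care needed is to check that $\theta<\pi$ holds, and to keep the exact surd form rather than a decimal, since the paper states the equality $\frac{1}{\sin(\pi/12)}=\sqrt2+\sqrt6$ exactly.
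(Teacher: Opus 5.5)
Your proposal is correct and matches the paper, which obtains the corollary directly by applying Lemma~\ref{lem:bounded-angle} with $\theta=2\pi/3$; your evaluation $1/\sin(\pi/12)=\sqrt6+\sqrt2<4$ supplies the arithmetic the paper leaves implicit. One small correction: when $u=v$ the strict bound reads $0<0$, which is false rather than trivial, so that case has to be excluded (as the paper implicitly does) instead of being dismissed as trivial.
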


Next, we show that it suffices to consider a local configuration around the segment $uv$ in which the convex hull is contained in a narrow rectangle and exits this rectangle through both of its short sides. To formulate this reduction, we introduce the following definitions.

Let $H_1$ and $H_2$ be the two hull paths from $u$ to $v$. For a hull path $H$ from $u$ to $v$, let $\lambda_u(H)$ be the line containing the first edge of $H$ at $u$, and let $\lambda_v(H)$ be the line containing the last edge of $H$ at $v$. If these two lines intersect, let $x(H)=\lambda_u(H)\cap\lambda_v(H)$, and define $\phi(H)=\angle ux(H)v$. If $\lambda_u(H)$ and $\lambda_v(H)$ are parallel, we set $\phi(H)=0$.

\begin{obs}\label{obs:path-in-triangle}
Suppose that $u$ and $v$ are nonadjacent vertices of $\ch(P)$ and that
$\lambda_u(H)$ and $\lambda_v(H)$ intersect. Then exactly one of the two
hull paths from $u$ to $v$ is contained in the triangle $x(H)uv$.
\end{obs}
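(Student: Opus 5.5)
The plan is to use the convexity of $\ch(P)$ together with the fact that $u$ and $v$ are nonadjacent, so that $\ch(P)$ is split by the chord $uv$ into two convex pieces. Write $H$ for the given hull path. The line $\lambda_u(H)$ supports $\ch(P)$ at $u$, since it contains a hull edge, and likewise $\lambda_v(H)$ supports $\ch(P)$ at $v$. Hence $\ch(P)$ lies in the closed half-plane of $\lambda_u(H)$ containing $v$ and in the closed half-plane of $\lambda_v(H)$ containing $u$. The intersection of these two half-planes is a wedge with apex $x(H)$; it contains $u$ and $v$ on its two boundary rays. Splitting this wedge by the line through $u$ and $v$ gives two pieces: the triangle $x(H)uv$ on the side of $x(H)$, and an unbounded region on the other side. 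No three points are collinear and the hull edges at $u$ and $v$ are not $uv$, so $x(H)$ is not on the line $uv$ and the triangle is nondegenerate.

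Next I will handle the two hull paths. The chord $uv$ is not a hull edge, so both $H_1$ and $H_2$ have interior vertices, and these lie strictly on opposite sides of the line $uv$. This is because $\ch(P)$ is a convex polygon and the chord $uv$ separates its boundary into the two chains. The path whose vertices lie on the side of $x(H)$ is contained in the wedge, since all of $P$ is, and it lies in the closed half-plane of line $uv$ containing $x(H)$. So it is contained in the triangle $x(H)uv$. Since the triangle is convex, it also contains the edges between these vertices.

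The other path has an interior vertex strictly on the opposite side of line $uv$ from $x(H)$. That vertex is not in the triangle, so this path is not contained in it. This gives \emph{exactly} one.

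The main subtlety is to check that $x(H)$ lies on the correct side, meaning that the wedge really contains the triangle, rather than $x(H)$ lying "behind" $u$ or $v$. I would argue as follows. $\lambda_u(H)$ contains the first edge $uw$ of $H$, and the path $H$ continues from $w$ to $v$ within the half-planes described above. In particular, $x(H)$ is the intersection of the ray from $u$ through $w$ with the ray from $v$ through the penultimate vertex of $H$. This holds because these two rays bound the convex polygon formed by $H$ and $uv$, and the first and last edges turn consistently. I expect this to be the only step needing care. It reduces to the statement that the interior angles of the convex polygon $H\cup uv$ at $u$ and $v$ sum to less than $\pi$ exactly when the lines meet on that side. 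If they met on the opposite side, then $H$, which is the path on the side of its own first edge, would put $H$ itself in the triangle, so the conclusion holds in either case.
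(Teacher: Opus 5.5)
Your first three paragraphs are correct and follow the same route as the paper. Both lines support $\ch(P)$, so $\ch(P)$ lies in the wedge with apex $x(H)$. Since $u$ and $v$ lie on its two boundary rays, the line $uv$ cuts this wedge into the triangle $x(H)uv$ and an unbounded piece. The two hull paths have their interior vertices strictly on opposite sides of line $uv$. You justify this more carefully than the paper's brief sketch does, and it already proves the statement.

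Drop your last paragraph, though, rather than treating it as the step needing care. The containment of the triangle in the wedge is already settled, because $x(H)$, $u$, $v$ all lie in the convex wedge. Moreover, the claims made there are false in general.

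If the interior angles at $u$ and $v$ of the polygon bounded by $H$ and $uv$ sum to more than $\pi$, then $\lambda_u(H)$ and $\lambda_v(H)$ do not meet on the rays through the first and last edges of $H$. They meet on the backward extensions of those edges, on the opposite side of line $uv$ from $H$. In that case it is the \emph{other} hull path, not $H$, that lies in the triangle. For instance, take $P=\{(-1,0),(0,-0.3),(1,0),(1.2,1),(-1.2,1)\}$ with $u=(-1,0)$, $v=(1,0)$ and $H$ the upper path. This gives $x(H)=(0,-5)$, and the triangle contains the lower path. This is exactly the situation of Figure~\ref{fig:rectangle-one}(ii).

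Your main argument is unaffected, because it never uses which path is $H$. It simply takes the path whose interior vertices lie on the side of $x(H)$.
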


\begin{proof}
Refer to Figure~\ref{fig:rectangle-one} for an illustration. The lines $\lambda_u(H)$ and $\lambda_v(H)$ support $\ch(P)$ at $u$ and $v$,
respectively. Hence $\ch(P)$ is contained in the union of the closed wedge bounded by
these two lines that contains $\ch(P)$ and the triangle $x(H)uv$. Since the boundary
of $\ch(P)$ consists of exactly two hull paths from $u$ to $v$, one of them lies in
the triangle $x(H)uv$, while the other lies on the opposite side. This proves the
claim.
\end{proof}

\begin{figure}
\begin{center}
\includegraphics{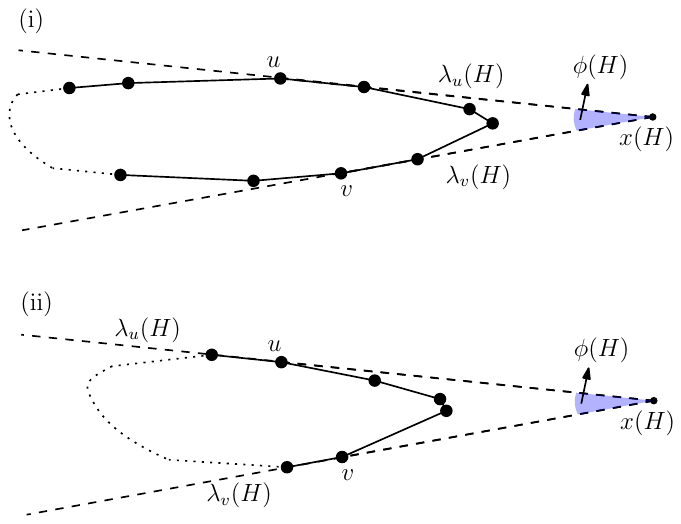}
\end{center}
\caption{ (i) Here, $H$ is the hull path from $u$ to $v$ in clockwise direction. It is contained in the triangle $x(H)uv$, and thus, it is shorter than $|x(H)u|+|x(H)v|$. (ii) Here, $H$ is the hull path from $u$ to $v$ in counterclockwise direction. This figure is used to illustrate that the hull path contained in the triangle $x(H)uv$ might not be $H$, but rather the other hull path from $u$ to $v$.}
\label{fig:rectangle-one}
\end{figure}

\begin{lem}\label{lem:rectangle}
Let $u,v\in P$, let $L=|uv|$, and suppose that
$\angle u>2\pi/3$ and $\angle v>2\pi/3$. Then one of the following holds:
\begin{enumerate}[label=(\arabic*)]
    \item one of the two hull paths from $u$ to $v$ has length at most
    $\sqrt{290}\,L$; or
    \item there exists a rectangle $Q_2$, centered at the midpoint of $uv$, with
    dimensions $6L\times\frac{24}{17}L$, such that every point of $\ch(P)$ whose
    orthogonal projection onto the long axis of $Q_2$ lies within distance $3L$ of
    the midpoint of $uv$ is contained in $Q_2$.
\end{enumerate}
\end{lem}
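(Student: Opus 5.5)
We place $u=(-L/2,0)$ and $v=(L/2,0)$, and take $Q_2=[-3L,3L]\times[-\tfrac{12}{17}L,\tfrac{12}{17}L]$ with long axis along $uv$. We prove the contrapositive: if (2) fails for this rectangle, then (1) holds. So assume there is a point $p\in\ch(P)$ with $|x_p|\le 3L$ and, by symmetry, $y_p>\tfrac{12}{17}L$. By convexity we may replace $p$ by a hull vertex, or by a point of $\partial\ch(P)$ on the slab boundary $|x|=3L$, still with these properties. The task is to bound the length of one of $H_1,H_2$ by $\sqrt{290}\,L$.

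\textbf{Step 1: confine the lower hull path.} Let $H$ be the hull path from $u$ to $v$ not passing through $p$; it lies in the closed half-plane $y\le 0$ side of the chord, up to its convex bulge. The supporting lines $\lambda_u(H)$ and $\lambda_v(H)$ must keep both $p$ and the other endpoint on one side. This forces the angle between $\lambda_u(H)$ and the ray $uv$ to be at most $\pi-\angle(vup)$, and symmetrically at $v$.

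\textbf{Step 2: use the hull-angle hypothesis.} Combine these constraints with $\angle u,\angle v>2\pi/3$. Since the interior angle at $u$ is formed by $\lambda_u(H)$ and the edge toward the upper side, and the upper edge at $u$ cannot be steeper than the direction $up$ allows, these together bound how far $\lambda_u(H)$ and $\lambda_v(H)$ can open. The goal is to rule out parallel or diverging supporting lines and to show that they meet at a point $x(H)$ with $\phi(H)$ bounded below.

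\textbf{Step 3: bound the length.} By Observation~\ref{obs:path-in-triangle}, one hull path lies in the triangle $x(H)uv$. Observation~\ref{obs:chain-triangle} then bounds its length by $|ux(H)|+|x(H)v|$, and Observation~\ref{obs:triangle} bounds this by $L/\sin(\phi(H)/2)$.

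\textbf{Step 4: optimize over $p$.} The lower bound on $\phi(H)$ is monotone in the position of $p$: it worsens as $p$ moves toward a short side and toward height exactly $\tfrac{12}{17}L$. So the extremal configuration has $p$ at a corner, say $p=(3L,\tfrac{12}{17}L)$. Here the slope $\tfrac{12}{17}/3$ and the offsets $3\pm\tfrac12$ should combine so that $|ux(H)|+|x(H)v|\le\sqrt{290}\,L=\sqrt{17^2+1}\,L$. The appearance of $17^2+1$ strongly suggests that the extremal triangle has legs in ratio $17:1$ relative to $L$.

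\textbf{Main obstacle.} I expect Step 2 to be the hardest. Naively, with $p$ near a far corner, the constraints from $p$ alone allow $\lambda_u(H)$ and $\lambda_v(H)$ to be nearly parallel, so the triangle $x(H)uv$ degenerates. The angle conditions at $u$ and $v$ must be used in a genuinely two-sided way. My first attempt is a case split:
\begin{itemize}
\item If the two supporting lines of $H$ meet on the $H$ side, use Step 3 directly.
\item Otherwise, the supporting lines on the $p$ side must converge, because of the angles exceeding $2\pi/3$. In that case apply Observation~\ref{obs:path-in-triangle} to the other hull path, the one through $p$. Here $\phi$ is controlled by the fact that $p$ lies within horizontal distance $3L$, which caps how far the apex can be from $uv$.
\end{itemize}
The constant $\sqrt{290}$ should then come out as the maximum of the two case bounds at the extremal corner position.
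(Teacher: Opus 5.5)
There is a genuine gap, and it is in your first line. You fix the long axis of $Q_2$ along $uv$ and then try to show that any point of $\ch(P)$ outside that particular rectangle forces outcome~(1). That stronger statement is false, even when $\angle u$ and $\angle v$ are close to $\pi$.

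Here is a counterexample. Take a thin wedge with apex $o$, bisector the positive $x$-axis, and half-angle $\delta$ with $\sin\delta<1/\sqrt{290}$.
\begin{itemize}
\item Put $u,v$ on the two boundary rays at distance $D$ from $o$. Then $uv$ is perpendicular to the bisector and $L=2D\sin\delta$.
\item Add a point $w$ very close to $o$.
\item Add two points $u',v'$ far out, slightly inside the boundary rays.
\end{itemize}
The convex pentagon $w,v,v',u',u$ has angles at $u$ and $v$ near $\pi$. The hull path through $w$ has length about $2D=L/\sin\delta>\sqrt{290}\,L$, and the path through $u',v'$ is longer still, so (1) fails. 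Yet $w$ projects onto the line $uv$ exactly at the midpoint of $uv$, at distance about $D\gg\frac{12}{17}L$ from it. So your $uv$-aligned rectangle misses it.

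In this example both $\phi(H_1)$ and $\phi(H_2)$ are tiny, so neither branch of your case split gives a bound. Since (1) genuinely fails here, no use of the angle hypotheses in Step~2 and no extremal-corner analysis in Step~4 can rescue the argument.

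The missing idea is that the orientation of $Q_2$ must be read off the hull, not off $uv$. The statement only asks for \emph{some} rectangle centered at the midpoint of $uv$. The paper's proof goes as follows.
\begin{itemize}
\item Set $\tau=2\arctan(1/17)$, so that $1/\sin(\tau/2)=\sqrt{290}$. This threshold, not a corner configuration, is the source of $\sqrt{290}$.
\item If $\phi(H_i)\ge\tau$ for some $i$, your Step~3 already gives a hull path of length at most $\sqrt{290}\,L$.
\item Otherwise take $H=H_1$, and let $W$ be the wedge (or strip) bounded by $\lambda_u(H)$ and $\lambda_v(H)$ that contains $\ch(P)$. Rotate so that the bisector of $W$ is horizontal and the midpoint of $uv$ is the origin.
\item The boundary lines are then $y=\rho x+b_u$ and $y=-\rho x+b_v$, with $\rho=\tan(\phi(H)/2)<1/17$. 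Since $|x_u|,|y_u|,|x_v|,|y_v|\le L/2$, we get $|b_u|,|b_v|\le\frac{1+\rho}{2}L$.
\item Hence every point of $\ch(P)$ with $|x|\le 3L$ has $|y|\le 3\rho L+\frac{1+\rho}{2}L<\frac{12}{17}L$.
\end{itemize}
The hypotheses $\angle u,\angle v>2\pi/3$ are not needed for this lemma at all.
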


\begin{proof}
If $u$ and $v$ are adjacent on $\ch(P)$, then the hull path consisting of
the single edge $uv$ has length $L\le \sqrt{290}\,L$, so outcome~(1) holds.
Hence, for the remainder of the proof, we may assume that $u$ and $v$ are
nonadjacent.
Set $\tau=2\arctan(1/17) \approx 0.0374\,\pi$. Then $\tan(\tau/2)=1/17$ and
\[
\frac{1}{\sin(\tau/2)}=\sqrt{290}.
\]

If one of the two hull paths $H_1,H_2$ satisfies $\phi(H_i)\ge\tau$, then
Observation~\ref{obs:path-in-triangle} implies that one of the two hull paths, say
$H_j$, is contained in the triangle $x(H_i)uv$. Hence, by
Observations~\ref{obs:convex-chain} and~\ref{obs:triangle},
\[
\ell(H_j)
\le |ux(H_i)|+|x(H_i)v|
\le \frac{1}{\sin(\phi(H_i)/2)}\,|uv|
\le \frac{1}{\sin(\tau/2)}\,L
=
\sqrt{290}\,L.
\]
Thus outcome~(1) holds.

We may therefore assume that $\phi(H_1)<\tau$ and $\phi(H_2)<\tau$. Choose one of the two hull paths, say $H=H_1$. If $\lambda_u(H)$ and
$\lambda_v(H)$ are parallel, let $W$ be the closed strip bounded by these
lines and let $d$ be their common direction. Otherwise, let $W$ be the
closed wedge bounded by these lines that contains $\ch(P)$, and let $d$ be
a bisector direction of this wedge. Apply a rigid motion so that the
midpoint $m$ of $uv$ is the origin and the direction $d$ is horizontal, as
shown in Figure~\ref{fig:rectangle-two}.

\begin{figure}
\begin{center}
\includegraphics{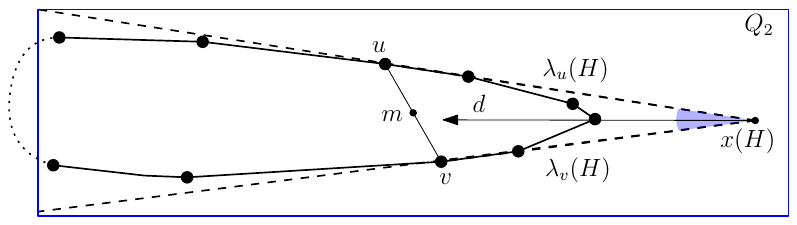}
\end{center}
\caption{$Q_2$ is centered at $m$ and is axis-parallel in the coordinate
system where the direction $d$ is horizontal. The dimensions of $Q_2$
are chosen so that $\lambda_u(H)$ and $\lambda_v(H)$ do not cross the
long sides of $Q_2$. Since $\ch(P)\subseteq W$, the convex hull can exit
$Q_2$ only through its short sides.}
\label{fig:rectangle-two}
\end{figure}

Let $\rho=\tan(\phi(H)/2)$. Since $\phi(H)<\tau$, we have $\rho<1/17$. In this coordinate
system the two boundary lines of $W$ have equations $y=\rho x+b_u$ and $y=-\rho x+b_v$ for
some constants $b_u,b_v$. Since the midpoint of
$uv$ is the origin, $|x_u|,|y_u|,|x_v|,|y_v|\le L/2$. Therefore
\[
|b_u|=|y_u-\rho x_u|\le \frac{1+\rho}{2}L,
\qquad
|b_v|=|y_v+\rho x_v|\le \frac{1+\rho}{2}L.
\]

Now let $p=(x_p,y_p)$ be a point of $\ch(P)$ whose orthogonal projection onto the
horizontal axis lies within distance $3L$ of the origin. Then $|x_p|\le 3L$, and
because $\ch(P)\subseteq W$,
\[
|y_p|
\le
\max\{|\rho x_p+b_u|,\ |-\rho x_p+b_v|\}
\le
3\rho L+\frac{1+\rho}{2}L
<
\frac{3}{17}L+\frac{9}{17}L
=
\frac{12}{17}L.
\]
Thus every such point lies in the rectangle
\[
Q_2=[-3L,3L]\times\left[-\frac{12}{17}L,\frac{12}{17}L\right],
\]
after the above rigid motion. This proves outcome~(2).
\end{proof}

We will make use of the following length bound for monotone convex chains.

\begin{obs}\label{obs:short-subchains}
Let $\gamma$ be an $x$-monotone convex polygonal chain contained in the rectangle
$[-bL,bL]\times[-hL,hL]$, for $b,h,L\ge 0$, and suppose that $\gamma$ starts at a point
$z=(x_z,y_z)$. Then
\[
\ell(\gamma)\le bL+|x_z|+3hL+|y_z|.
\]
\end{obs}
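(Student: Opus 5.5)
The plan is to split the length of $\gamma$ into horizontal and vertical contributions and bound each separately. Write $\gamma=(z=p_0,p_1,\ldots,p_k)$. For each edge, $|p_{i-1}p_i|\le |x_{p_i}-x_{p_{i-1}}|+|y_{p_i}-y_{p_{i-1}}|$. Summing gives
\[
  \ell(\gamma)\le \sum_i |\Delta x_i| + \sum_i |\Delta y_i|.
\]

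The horizontal term is the easy one. Because $\gamma$ is $x$-monotone, $\sum_i|\Delta x_i|=|x_{p_k}-x_z|$. Since $x_{p_k}\in[-bL,bL]$, this is at most $bL+|x_z|$.

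The vertical term uses convexity. Along an $x$-monotone convex chain the slope changes monotonically, so the $y$-coordinate is a unimodal function of $x$: it has at most one turning point. Hence the chain splits into at most two pieces on which $y$ is monotone. Suppose first that $y$ goes from $y_z$ to a turning value $y^*$ and then to $y_{p_k}$. The total vertical variation is then
\[
  |y^*-y_z|+|y_{p_k}-y^*|.
\]
Both $y^*$ and $y_{p_k}$ lie in $[-hL,hL]$. So the first term is at most $hL+|y_z|$, and the second is at most $2hL$. Together this gives at most $3hL+|y_z|$. If there is no turning point, the variation is $|y_{p_k}-y_z|\le hL+|y_z|$, which is smaller. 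Adding the horizontal and vertical bounds yields $\ell(\gamma)\le bL+|x_z|+3hL+|y_z|$.

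The only point needing care is the unimodality claim. An $x$-monotone convex chain is the graph of a convex or concave piecewise-linear function of $x$ over an interval, so its edge slopes are monotone in order along the chain. The sign of $\Delta y_i$ therefore changes at most once, which is exactly what the vertical estimate needs. Direction of traversal does not matter, since the same argument applies with $x$ increasing or decreasing from $z$.
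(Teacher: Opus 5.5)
Your proof is correct and follows the same route as the paper: you bound the length by the horizontal plus vertical variation, use $x$-monotonicity to get the horizontal bound $bL+|x_z|$, and use the fact that $y$ has at most one turning point along an $x$-monotone convex chain to get the vertical bound $3hL+|y_z|$. Your explicit estimate $|y^*-y_z|+|y_{p_k}-y^*|\le (hL+|y_z|)+2hL$ and your remark on the direction of traversal spell out details the paper leaves implicit; the latter is exactly what is needed when the paper applies the observation to $\pi_B$ starting from $v$.
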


\begin{proof}
Since $\gamma$ is $x$-monotone, its horizontal variation is at most $bL+|x_z|$.
Moreover, the $y$-coordinate along an $x$-monotone convex chain has at most one local
extremum. Since the chain is contained in the strip $-hL\le y\le hL$, its vertical
variation is at most $3hL+|y_z|$. Finally, the length of a polygonal chain is at most
the sum of its horizontal and vertical variations.
\end{proof}

\begin{lem}\label{lem:local-chains}
Assume that outcome~(2) of Lemma~\ref{lem:rectangle} holds, and let $Q_2$ be the
corresponding rectangle. Suppose that neither hull path from $u$ to $v$ has
length at most $\kappaGT L$. Then $\partial\ch(P)\cap Q_2$ consists of two
$x$-monotone convex chains $A$ and $B$, each meeting both short sides of $Q_2$,
where $u\in A$ and $v\in B$. Moreover,
$P\cap Q_2=V(A)\cup V(B)$, where $V(A)=A\cap P$ and $V(B)=B\cap P$.
\end{lem}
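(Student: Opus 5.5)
We work in the coordinates of the proof of Lemma~\ref{lem:rectangle}. There $m$ is the origin, the wedge (or strip) $W\supseteq\ch(P)$ has bounding lines $y=\rho x+b_u$ and $y=-\rho x+b_v$ with $\rho<1/17$, and $Q_2=[-3L,3L]\times[-\frac{12}{17}L,\frac{12}{17}L]$. By outcome~(2), the slab $S=\{|x|\le 3L\}$ satisfies $\ch(P)\cap S\subseteq Q_2$. So $K=\ch(P)\cap Q_2=\ch(P)\cap S$ is a convex set that does not touch the long sides of $Q_2$ except possibly tangentially. The plan has three steps. First show that $K$ meets both vertical lines $x=\pm3L$. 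Then describe $\partial\ch(P)\cap Q_2$ as the upper and lower boundary of $K$. Finally place $u$ and $v$ on different chains and identify the points of $P$.

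\textbf{Step 1: $K$ reaches both short sides.} Suppose $\ch(P)$ does not reach the line $x=3L$. Then all of $\ch(P)$ lies in $[-3L,3L]\times[-\frac{12}{17}L,\frac{12}{17}L]$ on that side, and so do both hull paths from $u$ to $v$ on that side. Consider the hull path between $u$ and $v$ that passes through the rightmost point of $\ch(P)$. Split it at that rightmost point into two $x$-monotone convex pieces. Each piece lies in $Q_2$, so Observation~\ref{obs:short-subchains} applies with $b=3$, $h=12/17$ and starting point $u$ or $v$ (where $|x_z|,|y_z|\le L/2$). Each piece then has length at most
\[
3L+\tfrac L2+\tfrac{36}{17}L+\tfrac L2 .
\]
We want the sum of the two pieces to be at most $\kappaGT L$. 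I expect to compare with $\kappa=6+\frac{72}{17}+\sqrt2+2\sqrt{9+144/289}$ term by term, tightening the bound using that the piece runs toward the right side only, which bounds the horizontal variation by $3L+|x_z|$. If this crude sum is too large, I would instead bound the whole convex curve by the perimeter of $Q_2$ and use monotonicity of perimeter. This fallback gives $2(6+\frac{24}{17})L=\frac{252}{17}L<\kappa L$, since each hull path is at most half of the perimeter of $\ch(P)\subseteq Q_2$, or more directly since the shorter hull path is at most half the perimeter. The perimeter argument suffices, so Step 1 is a contradiction with the hypothesis. The same argument applies on the left.

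\textbf{Step 2: the two chains.} Since $K$ is convex, spans $x\in[-3L,3L]$, and meets each vertical line in a segment, $\partial K$ consists of the graph of a concave upper function $A$, the graph of a convex lower function $B$, and two vertical segments on the short sides. The vertical pieces are not part of $\partial\ch(P)$ except at their endpoints. Hence $\partial\ch(P)\cap Q_2=A\cup B$, with both chains $x$-monotone, convex, and meeting both short sides. The long sides are never touched in their interior because the bound in Lemma~\ref{lem:rectangle} is strict.

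\textbf{Step 3: $u$ and $v$ on different chains, and $P\cap Q_2$.} All points of $P$ lie on $\partial\ch(P)$, so $P\cap Q_2\subseteq A\cup B$, which gives the final claim. Suppose $u$ and $v$ were both on $A$. Then the subchain of $A$ between them is a hull path, because $A$ is a connected arc of $\partial\ch(P)$. This path is $x$-monotone and convex with endpoints in $[-L/2,L/2]^2$, so its length is at most $L+2\cdot\frac{12}{17}L\cdot\frac32<\kappa L$, a contradiction. Therefore $u$ and $v$ lie on different chains, and we name them so that $u\in A$ and $v\in B$. The main obstacle is Step 1, namely making sure the hull length estimate there falls below $\kappaGT L$. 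The perimeter bound handles it comfortably.
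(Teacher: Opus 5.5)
Your three-step structure matches the paper's, and Steps~2 and~3 are fine, but Step~1 is not justified as written. You claim that the hull path through the rightmost point of $\ch(P)$ splits there into two $x$-monotone pieces lying in $Q_2$. That requires the path to avoid the leftmost point of $\ch(P)$, and you never check this.

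Outcome~(2) only controls the part of $\ch(P)$ in the slab $|x|\le 3L$. When the hull misses $x=3L$, it may still extend far to the left of $x=-3L$. If $u$ and $v$ lay on the same (say upper) boundary arc of $\ch(P)$, the path through the rightmost point would also wrap around that far-left end. It would then leave $Q_2$ and would not be a union of two $x$-monotone pieces. Your fallback does not repair this:
\begin{itemize}
\item $\ch(P)\subseteq Q_2$ is exactly what is not known;
\item ``each hull path is at most half the perimeter'' is false;
\item half the perimeter of $\ch(P)$ is not bounded in terms of $L$ in this situation.
\end{itemize}

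The gap can be closed in either of two ways. The paper's argument runs as follows. By convexity, the part of $\partial\ch(P)$ outside the slab is a single connected arc lying left of $x=-3L$ and avoiding $u$ and $v$. So it is contained in one hull path, and the \emph{other} hull path lies in the slab, hence in $Q_2$. Together with the chord $uv$, that path bounds a convex polygon inside $Q_2$, and monotonicity of perimeter gives length at most $\frac{252}{17}L<\kappaGT L$.

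Alternatively, your own choice of path is correct once you state the fact it silently relies on. In the coordinates of Lemma~\ref{lem:rectangle}, $u$ and $v$ lie on the two boundary lines of the thin wedge (or strip) $W\supseteq\ch(P)$. After the rotation these are its upper and lower boundary lines, so one of $u,v$ is the highest point of $\ch(P)$ on its vertical line and the other is the lowest. Hence $u$ and $v$ lie on opposite arcs of $\ch(P)$, and the path through the rightmost point stays in $x\ge\min\{x_u,x_v\}\ge -L/2$. With this, your crude estimate already suffices, since $2\bigl(4+\frac{36}{17}\bigr)L\approx 12.24L<\kappaGT L$. The same fact also makes Step~3 immediate.
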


\begin{proof}
We work in the coordinate system introduced in the proof of Lemma~\ref{lem:rectangle}, so
\[
Q_2=[-3L,3L]\times\left[-\frac{12}{17}L,\frac{12}{17}L\right].
\]
By Lemma~\ref{lem:rectangle}, every point of $\ch(P)$ with $|x|\le 3L$ lies in $Q_2$.

We first claim that the convex hull meets both vertical lines $x=-3L$ and $x=3L$.
Indeed, suppose for example that $\ch(P)$ does not meet the line $x=-3L$. Then the
part of $\partial\ch(P)$ outside the strip $|x|\le 3L$, if nonempty, lies to the
right of the line $x=3L$ and forms a single boundary interval, by convexity. Since
$u$ and $v$ lie in the strip, at most one of the two hull paths from $u$ to $v$
contains this interval. The other hull path is contained entirely in the strip, and
therefore is contained in $Q_2$ by Lemma~\ref{lem:rectangle}. 
  Since it is a convex chain contained in a rectangle, its length would be at
most
\[
\operatorname{per}(Q_2)
=
2\left(6L+\frac{24}{17}L\right)
=
\frac{252}{17}L
<\kappaGT L,
\]
contrary to the assumption. The case where $\ch(P)$ does not meet $x=3L$ is symmetric.

Thus $\ch(P)$ meets both vertical sides of the strip. Since $\ch(P)$ is convex, the
portion of its boundary inside the strip consists of two $x$-monotone convex chains,
one upper and one lower, each meeting both vertical sides of the strip. Both chains are contained in $Q_2$, and therefore they meet both short
sides of $Q_2$.

Finally, $u$ and $v$ cannot lie on the same one of these two chains; otherwise the subchain between them would be a hull path contained in $Q_2$,
again of length less than $\kappaGT L$. We name the two chains $A$ and $B$ so that $u\in A$ and $v\in B$. Since
all points of $P$ lie on $\partial\ch(P)$, every point of $P\cap Q_2$ belongs to one
of these two chains, so $P\cap Q_2=V(A)\cup V(B)$.
\end{proof}
From now on, whenever outcome~(2) of Lemma~\ref{lem:rectangle} holds, let $Q_1$ be the
rectangle concentric with $Q_2$, with the same height as $Q_2$, and with dimensions
$\frac85L\times\frac{24}{17}L$. We work in the coordinate system in which the midpoint of $uv$ is the origin and
\[
Q_1=
\left[-\frac45L,\frac45L\right]
\times
\left[-\frac{12}{17}L,\frac{12}{17}L\right],
\qquad
Q_2=
[-3L,3L]
\times
\left[-\frac{12}{17}L,\frac{12}{17}L\right].
\]
By reflecting the point set in the vertical axis if necessary, we assume that
$x_u\le x_v$. Since the midpoint of $uv$ is the origin, this implies
$x_u\le 0\le x_v$.

Partition $Q_2$ into four regions:
\[
R_1=(Q_2\setminus Q_1)\cap\{x<0\},\qquad
R_2=Q_1\cap\{x\le 0\},
\]
\[
R_3=Q_1\cap\{x>0\},\qquad
R_4=(Q_2\setminus Q_1)\cap\{x>0\}.
\]

Refer to Figure~\ref{fig:configuration} for an illustration of these regions.

\begin{figure}
\begin{center}
\includegraphics{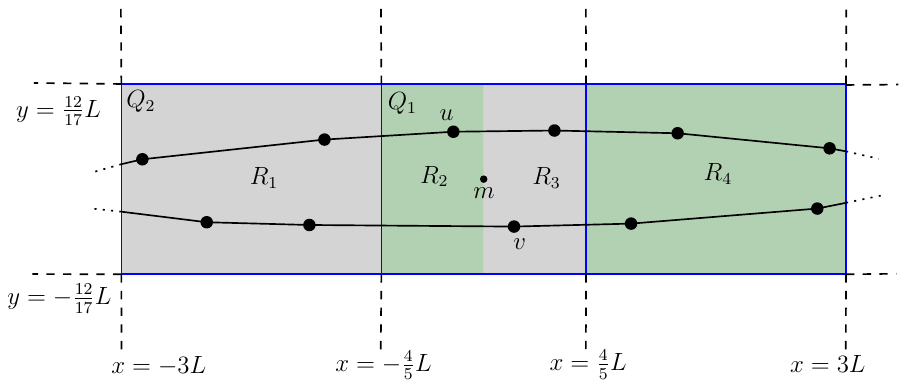}
\end{center}
\caption{The four regions $R_1,R_2,R_3,R_4$ that $Q_2$ is partitioned into. Note that $R_2,R_3$ partition the smaller rectangle $Q_1$.}
\label{fig:configuration}
\end{figure}

\begin{obs}\label{obs:metric-bounds}
In the above setting, the following hold.
\begin{enumerate}[label=(\arabic*)]
    \item $\operatorname{diam}(Q_1)<\frac{11}{5}L$.

    \item $\operatorname{diam}(R_1\cup R_2)=\operatorname{diam}(R_3\cup R_4)<\frac{19}{5}L$.

    \item If $p\in R_1$ and $q\in Q_2$ has $x_q\le x_v$, then
    $|pq|<\frac{19}{5}L$. The symmetric statement holds if $p\in R_4$ and
    $q\in Q_2$ has $x_q\ge x_u$.

    \item If $p\in R_4$ and $q\in Q_2\setminus R_1$, then $|pq|<6L$.
    The symmetric statement holds if $p\in R_1$ and $q\in Q_2\setminus R_4$.

    \item If $p\in Q_1$ and $q$ lies strictly to the left or to the right of $Q_2$,
    then $|pq|>\frac{11}{5}L$.

    \item If $p$ lies strictly to the left of $Q_2$ and $q\in R_4$, then
    $|pq|>\frac{19}{5}L$. The symmetric statement holds if $p$ lies strictly to the
    right of $Q_2$ and $q\in R_1$.

    \item If $p$ lies strictly to the left of $Q_2$ and $q$ lies strictly to the right
    of $Q_2$, then $|pq|>6L$.
\end{enumerate}
\end{obs}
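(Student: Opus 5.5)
The plan is to verify each item directly from coordinates, using $Q_1=[-\frac45L,\frac45L]\times[-\frac{12}{17}L,\frac{12}{17}L]$, $Q_2=[-3L,3L]\times[-\frac{12}{17}L,\frac{12}{17}L]$, and $x_u\le 0\le x_v$ with $|x_u|,|x_v|\le L/2$ (indeed $x_v-x_u\le L$, so $x_v\le L/2$ and $x_u\ge -L/2$). Every distance will be bounded through its horizontal and vertical components. An upper bound comes from $|pq|^2\le (\Delta x)^2+(\Delta y)^2$ with $\Delta y\le \frac{24}{17}L$. A lower bound comes from $|pq|\ge \Delta x$. All figures are in units of $L$, and $(24/17)^2=576/289\approx 1.993$.

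For the upper bounds:
\begin{enumerate}[label=(\arabic*)]
\item $\diam(Q_1)^2=(8/5)^2+(24/17)^2\approx 2.56+1.993=4.553<(11/5)^2=4.84$.
\item $R_1\cup R_2$ is the closed half of $Q_2$ with $x\le 0$, a $3\times\frac{24}{17}$ rectangle. Its squared diameter is $9+1.993=10.993<(19/5)^2=14.44$, and the same holds for $R_3\cup R_4$ by symmetry.
\item For $p\in R_1$ we have $x_p\in[-3,-\frac45)$, and $x_q\le x_v\le \frac12$, so $|\Delta x|\le 3.5$. This gives $|pq|^2\le 12.25+1.993=14.243<14.44$.
\item For $p\in R_4$ and $q\notin R_1$ we have $x_q\ge -\frac45$, so $|\Delta x|\le 3.8$. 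This gives $|pq|^2\le 14.44+1.993<36$.
\end{enumerate}
The symmetric statements in (3) and (4) follow from the same computations with left and right exchanged. Note that $x_q\ge x_u\ge -\frac12$ is what is needed there, and the reflection that made $x_u\le x_v$ does not break this symmetry.

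For the lower bounds I use only horizontal separation. A point strictly left of $Q_2$ has $x<-3$, and a point strictly right has $x>3$.
\begin{enumerate}[label=(\arabic*)]
\setcounter{enumi}{4}
\item A point $p\in Q_1$ has $|x_p|\le \frac45$, so $\Delta x>3-\frac45=\frac{11}{5}$.
\item A point $q\in R_4$ has $x_q>\frac45$, so $\Delta x>3+\frac45=\frac{19}{5}$.
\item Here $\Delta x>6$.
\end{enumerate}

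The only step requiring care is (3). It depends on the bound $x_v\le L/2$ obtained from the normalization, and its margin is small: $14.243$ against $14.44$. I would write that arithmetic explicitly, noting $12.25+576/289<14.44$. Everything else is routine.
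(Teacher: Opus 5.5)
Your proposal is correct and follows essentially the same route as the paper. In (1)--(4) you bound each distance by its horizontal extent and the vertical extent $\frac{24}{17}L$, using $x_v\le L/2$ (resp.\ $x_u\ge -L/2$) for (3), and in (5)--(7) you use horizontal separation alone. One small correction: $x_v\le L/2$ does not follow from $x_v-x_u\le L$ alone. It follows from $v=-u$ and $|u|=L/2$, since the midpoint of $uv$ is the origin.
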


\begin{proof}
The first bound follows from
\[
\operatorname{diam}(Q_1)
=
2L\sqrt{\left(\frac45\right)^2+\left(\frac{12}{17}\right)^2}
<
\frac{11}{5}L.
\]
For (2), the rectangle $R_1\cup R_2$ has width $3L$ and height
$\frac{24}{17}L$, so its diameter is
\[
L\sqrt{9+\left(\frac{24}{17}\right)^2}
<
\frac{19}{5}L.
\]
The same holds for $R_3\cup R_4$.

For (3), if $p\in R_1$, then $x_p\ge -3L$, while $x_q\le x_v\le L/2$. Thus the
horizontal distance between $p$ and $q$ is at most $\frac72L$, and the vertical
distance is at most $\frac{24}{17}L$. Hence
\[
|pq|
\le
L\sqrt{\left(\frac72\right)^2+\left(\frac{24}{17}\right)^2}
<
\frac{19}{5}L.
\]
The symmetric statement is identical.

For (4), the maximum possible horizontal distance between a point of $R_4$ and a
point of $Q_2\setminus R_1$ is $\frac{19}{5}L$, and the maximum vertical distance is
$\frac{24}{17}L$, so the distance is less than
\[
L\sqrt{\left(\frac{19}{5}\right)^2+\left(\frac{24}{17}\right)^2}<\frac{21}{5}L<6L.
\]

Finally, (5), (6), and (7) follow directly from horizontal separation. The horizontal
distance from $Q_1$ to either vertical exterior of $Q_2$ is greater than
$3L-\frac45L=\frac{11}{5}L$; the horizontal distance from the left exterior of $Q_2$
to $R_4$ is greater than $3L+\frac45L=\frac{19}{5}L$; and the horizontal distance
from the left exterior of $Q_2$ to the right exterior of $Q_2$ is greater than $6L$.
\end{proof}

For the rest of the proof, fix $u,v\in P$, set $L=|uv|$, and assume that
$\angle u>2\pi/3$ and $\angle v>2\pi/3$. Assume also that neither hull path
from $u$ to $v$ has length at most $\kappaGT L$. Since
$\sqrt{290}<\kappaGT$, outcome~(2) of Lemma~\ref{lem:rectangle} holds.
By Lemma~\ref{lem:local-chains}, the intersection
$\partial\ch(P)\cap Q_2$ consists of two $x$-monotone convex chains $A$ and
$B$, each joining the two short sides of $Q_2$, with $u\in A$ and $v\in B$.
Write $V(A)=A\cap P$ and $V(B)=B\cap P$. This configuration is shown in
Figure~\ref{fig:configuration}.

Call an edge $ab\in E(\T)$ a \emph{bridge} if
\[
 a\in V(A),\qquad b\in V(B),\qquad a,b\in Q_2.
\]
The next lemma explains why a bridge is exactly what we need for the spanning-ratio bound.

\begin{lem}\label{lem:bridge-short-path}
Let $u,v\in P$, let $L=|uv|$, and work in a coordinate system in which the midpoint
of $uv$ is the origin. Let
\[
Q_2=[-3L,3L]\times\left[-\frac{12}{17}L,\frac{12}{17}L\right].
\]
Suppose that $\partial\ch(P)\cap Q_2$ consists of two $x$-monotone convex chains
$A$ and $B$, with $u\in A$ and $v\in B$, and that
$P\cap Q_2=V(A)\cup V(B)$, where $V(A)=A\cap P$ and $V(B)=B\cap P$.
If $\T$ contains an edge $ab$ with
\[
 a\in V(A),\qquad b\in V(B),\qquad a,b\in Q_2,
\]
then $d_\T(u,v)\le \kappaGT L$.
\end{lem}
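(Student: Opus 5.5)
The plan is to build the $u$--$v$ path described in the overview and bound each of its three pieces separately. The three terms of $\kappaGT$ already match this split:
\[
 \kappaGT = \Bigl(6+\tfrac{72}{17}+\sqrt2\Bigr) + 2\sqrt{9+\tfrac{144}{289}} ,
\]
and the second summand is exactly $\diam(Q_2)/L=\sqrt{36+(24/17)^2}$.

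\textbf{The path.} Let $A_{ua}$ be the subchain of $A$ between $u$ and $a$, and let $B_{bv}$ be the subchain of $B$ between $b$ and $v$. We use the path
\[
 A_{ua},\quad ab,\quad B_{bv}.
\]
Both subchains are hull paths in $\T$. Indeed, $u,a\in V(A)\subseteq Q_2$, and $A$ is a connected arc of $\partial\ch(P)\cap Q_2$. So the portion of $A$ between $u$ and $a$ lies inside $Q_2$. Its vertices are points of $P$, and its edges are complete hull edges, since the only partial edges of $A$ are those crossing the short sides of $Q_2$, at the ends of $A$. Every hull edge belongs to $\T$, so $A_{ua}$ is a path in $\T$. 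The same argument applies to $B_{bv}$. Hence
\[
 d_\T(u,v)\le \len(A_{ua})+|ab|+\len(B_{bv}).
\]

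\textbf{Bounding the pieces.} The bridge is easy: since $a,b\in Q_2$, we have $|ab|\le\diam(Q_2)=2\sqrt{9+144/289}\,L$.

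For the chains, $A_{ua}$ is an $x$-monotone convex chain contained in $[-3L,3L]\times[-\frac{12}{17}L,\frac{12}{17}L]$, and it starts at $u$. Observation~\ref{obs:short-subchains} with $b=3$ and $h=12/17$ gives
\[
 \len(A_{ua})\le 3L+\tfrac{36}{17}L+|x_u|+|y_u| .
\]
Likewise, taking $v$ as the starting point,
\[
 \len(B_{bv})\le 3L+\tfrac{36}{17}L+|x_v|+|y_v| .
\]

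\textbf{Combining.} The midpoint of $uv$ is the origin, so $v=-u$ and $|u|=L/2$. Therefore
\[
 |x_u|+|y_u|+|x_v|+|y_v| = 2(|x_u|+|y_u|)\le 2\sqrt2\,|u| = \sqrt2\,L .
\]
Adding the three bounds gives $d_\T(u,v)\le\bigl(6+\frac{72}{17}+\sqrt2+2\sqrt{9+\frac{144}{289}}\bigr)L=\kappaGT L$.

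\textbf{Main obstacle.} The delicate step is justifying that Observation~\ref{obs:short-subchains} applies to $A_{ua}$. This needs two facts. First, the subchain lies in $Q_2$ and is $x$-monotone and convex, which is inherited from $A$ by Lemma~\ref{lem:local-chains}. Second, the observation is insensitive to whether $a$ lies left or right of $u$; this holds because it only uses the horizontal and vertical variation measured from the starting point. I would also check that $A_{ua}$ really is a path in $\T$, i.e.\ that no vertex between $u$ and $a$ lies outside $Q_2$. This follows because $A$ is a single connected component of $\partial\ch(P)\cap Q_2$.
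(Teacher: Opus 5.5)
Your proposal is correct and follows the paper's proof essentially step for step. It uses the same path (the subchain of $A$ from $u$ to $a$, then the edge $ab$, then the subchain of $B$ from $b$ to $v$), the same application of Observation~\ref{obs:short-subchains} with $b=3$ and $h=12/17$, the same bound $2(|x_u|+|y_u|)\le\sqrt2\,L$ obtained from $v=-u$, and the same estimate $|ab|\le\diam(Q_2)$. Your extra check that these subchains consist of complete hull edges, and are therefore paths in $\T$, is a detail the paper leaves implicit.
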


\begin{proof}
Let $ab$ be a bridge with $a\in V(A)$ and $b\in V(B)$, as shown in Figure~\ref{fig:bridge-path}(ii).
Let $\pi_A$ be the subpath of $A$ from $u$ to $a$, and let $\pi_B$ be the subpath of $B$ from $b$ to $v$.
Then $\pi_A$, followed by the edge $ab$, followed by $\pi_B$, is a path in $\T$ from $u$ to $v$.

Apply Observation~\ref{obs:short-subchains} to $\pi_A$ and $\pi_B$ with rectangle parameters $b=3$ and $h=12/17$.
Since the midpoint of $uv$ is the origin, the coordinates satisfy $v=-u$, and therefore
\[
|x_u|+|x_v|+|y_u|+|y_v|=2(|x_u|+|y_u|)\le \sqrt{2}\,L.
\]
Thus
\[
\ell(\pi_A)+\ell(\pi_B)
\le 6L+6\cdot\frac{12}{17}L+\sqrt2\,L
=\left(6+\frac{72}{17}+\sqrt2\right)L.
\]
Moreover,
\[
|ab|\le \diam(Q_2)=2L\sqrt{9+\left(\frac{12}{17}\right)^2}.
\]
Consequently,
\[
\begin{aligned}
d_\T(u,v)
&\le \ell(\pi_A)+|ab|+\ell(\pi_B)\\
&\le
\left(
6+\frac{72}{17}+\sqrt2
+2\sqrt{9+\frac{144}{289}}
\right)L\\
&=\kappaGT L
<18L.
\end{aligned}
\]
\end{proof}

For the next three lemmas we argue under the contradiction assumption that no bridge
exists; this assumption is included in each statement below.

\begin{lem}\label{lem:R1-nonempty}
Let $u,v\in P$, let $L=|uv|$, and suppose that outcome~(2) of
Lemma~\ref{lem:rectangle} holds with the rectangles $Q_1,Q_2$ and regions
$R_1,R_2,R_3,R_4$ defined above. Suppose that neither hull path from $u$ to $v$
has length at most $\kappaGT L$, and let $A$ and $B$ be the two chains given by
Lemma~\ref{lem:local-chains}. If no bridge exists in $Q_2$, then
$R_1\cap P\neq\emptyset$.
\end{lem}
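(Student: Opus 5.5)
The plan is to argue by contradiction: assume no bridge exists and $R_1\cap P=\emptyset$. Work in the coordinate system above, in which $Q_2=[-3L,3L]\times[-\frac{12}{17}L,\frac{12}{17}L]$ and $x_u\le 0\le x_v$. We then exhibit a short segment between the two chains whose only possible blockers are too long.

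\emph{Step 1: locate the leftmost vertices of $A$ and $B$.} By Lemma~\ref{lem:local-chains}, both chains $A$ and $B$ meet the left short side $x=-3L$ of $Q_2$. Let $a$ be the leftmost vertex of $P$ on $A$ inside $Q_2$, and let $b$ be the leftmost such vertex on $B$. Both exist, since $u\in V(A)$ and $v\in V(B)$. Because $R_1\cap P=\emptyset$ and $P\cap Q_2=V(A)\cup V(B)$, every vertex of either chain in $Q_2$ has $x\ge -\frac45L$. Moreover $x_a\le x_u\le 0$ and $x_b\le x_v\le L/2<\frac45L$, so $a,b\in Q_1$.

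Consequently, the hull edge of $A$ ending at $a$ enters $Q_2$ from a vertex $a'$ strictly to the left of $Q_2$, and likewise $B$ has a hull edge $b'b$ with $b'$ strictly left of $Q_2$. By convexity, and by Lemma~\ref{lem:rectangle} (every point of $\ch(P)$ with $|x|\le 3L$ lies in $Q_2$), the boundary arc of $\ch(P)$ from $a$ through $a'$ and $b'$ to $b$ has all its vertices other than $a,b$ strictly to the left of $Q_2$. The complementary arc from $a$ to $b$ contains only vertices with $x\ge -\frac45L$, namely vertices in $Q_2\setminus R_1$ or to the right of $Q_2$.

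\emph{Step 2: the segment $ab$ and its blocker.} Since $a\in V(A)$, $b\in V(B)$ and $a,b\in Q_2$, the segment $ab$ cannot be an edge of $\T$, because it would then be a bridge. Note that $ab$ is not a hull edge, since both arcs between $a$ and $b$ contain other vertices ($a'$ on one side; on the other, $u\ne a$ or $v\ne b$, or else $uv$ itself would be the bridge). By Observation~\ref{obs:blocker} there is an edge $pq\in E(\T)$ crossing $ab$ with
\[
|pq|\le|ab|\le\diam(Q_1)<\tfrac{11}{5}L,
\]
using Observation~\ref{obs:metric-bounds}(1). Since $P$ is in convex position, $pq$ crosses the chord $ab$ only if $p$ and $q$ lie in the interiors of the two different boundary arcs between $a$ and $b$. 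Thus, say, $p$ lies strictly left of $Q_2$ ($x_p<-3L$), and $q$ has $x_q\ge-\frac45L$.

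\emph{Step 3: contradiction.} The horizontal separation gives $|pq|>3L-\frac45L=\frac{11}{5}L$. This is the same estimate as in Observation~\ref{obs:metric-bounds}(5), except that $q$ may lie in $R_3$, in $R_4$, or to the right of $Q_2$, and the horizontal-gap argument covers all these cases. This contradicts $|pq|<\frac{11}{5}L$, so $R_1\cap P\neq\emptyset$.

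The step needing the most care is Step 1: we must justify that the arc through $a'$ and $b'$ lies entirely to the left of $x=-3L$, and that all other vertices satisfy $x\ge-\frac45L$. Both follow from the strip description of $\partial\ch(P)$ in Lemma~\ref{lem:local-chains} together with the emptiness of $R_1$.
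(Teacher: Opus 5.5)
Your proof is correct and follows essentially the same route as the paper: you pick the same $a,b$ (the leftmost chain vertices, shown to lie in $Q_1$), force the blocker to have one endpoint strictly left of $Q_2$ and the other with $x\ge -\frac{4}{5}L$, and use horizontal separation to get length $>\frac{11}{5}L>|ab|$. Your boundary-arc argument spells out the step that the paper compresses into ``by the choice of $ab$ as leftmost.'' It also handles in one stroke the cases that the paper splits across Observation~\ref{obs:metric-bounds}(5)--(7).
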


\begin{figure}
\begin{center}
\includegraphics{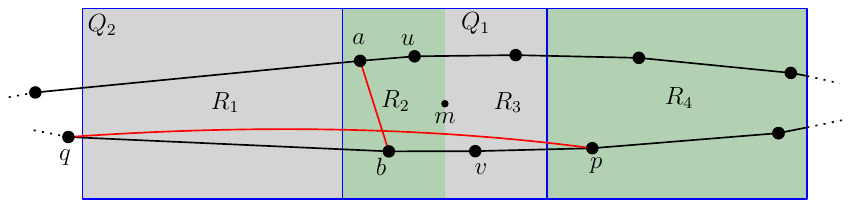}
\end{center}
\caption{If $R_1$ does not contain any points, any blocker of $ab$ must have one endpoint $q$ to the left of $Q_2$, and the other endpoint $p$ in $Q_1$, in $R_4$, or to the right of $Q_2$. In each case, $pq$ is shown to be longer than $ab$. Here, $pq$ has been drawn slightly curved for illustration purposes. }
\label{fig:nonempty}
\end{figure}

\begin{proof}
Refer to Figure~\ref{fig:nonempty}.
Assume for contradiction that $R_1\cap P=\emptyset$. Since $u\in V(A)\cap Q_1$ and
$v\in V(B)\cap Q_1$, the sets $V(A)\cap Q_1$ and $V(B)\cap Q_1$ are nonempty.
Let $a$ be the leftmost vertex of $V(A)\cap Q_1$, and let $b$ be the leftmost
vertex of $V(B)\cap Q_1$.

By Observation~\ref{obs:metric-bounds}(1), $|ab|<\frac{11}{5}L$. Since no edge of
$\T$ joins $V(A)$ to $V(B)$ inside $Q_2$, the segment $ab$ is not an edge of $\T$.
By Observation~\ref{obs:blocker}, there is a blocker $pq\in E(\T)$ crossing
$ab$. By the choice of $ab$ as leftmost, one endpoint of $pq$ must lie strictly to the
left of $Q_2$; call this endpoint $q$.

The other endpoint $p$ cannot lie strictly to the left of $Q_2$ or in $R_1$: in
either case, the segment $pq$ is disjoint from the relative interior of
$ab\subseteq Q_1$, and so it cannot block $ab$. Thus $p$ lies in $Q_1$, in $R_4$,
or strictly to the right of $Q_2$. In these three cases,
Observations~\ref{obs:metric-bounds}(5), (6), and (7) give respectively
$|pq|>\frac{11}{5}L$, $|pq|>\frac{19}{5}L$, and $|pq|>6L$. In every case,
$|pq|>\frac{11}{5}L>|ab|$, contradicting the fact that $pq$ is a blocker of
$ab$. Therefore $R_1\cap P\neq\emptyset$.
\end{proof}

Since $P\cap Q_2=V(A)\cup V(B)$ and $R_1\cap P\neq\emptyset$, either
$R_1\cap V(A)\neq\emptyset$ or $R_1\cap V(B)\neq\emptyset$. These two cases are
symmetric after exchanging the roles of $A$ and $B$, so assume without loss of
generality that $R_1\cap V(A)\neq\emptyset$.

We now describe a process that we call the \emph{left-to-right sweep}. This process is shown in Figure~\ref{fig:left-to-right-one}(i). Let $a_1$ be the leftmost vertex of $V(A)$.
Since $R_1\cap V(A)\neq\emptyset$, we have $a_1\in R_1$. Among all segments $a_1b$
with $b\in V(B)$, choose a leftmost one and denote it by $a_1b_1$. Equivalently,
$b_1$ is the leftmost vertex of $V(B)$. Since $v\in V(B)$, we have $x_{b_1}\le x_v$.

\begin{figure}
\begin{center}
\includegraphics{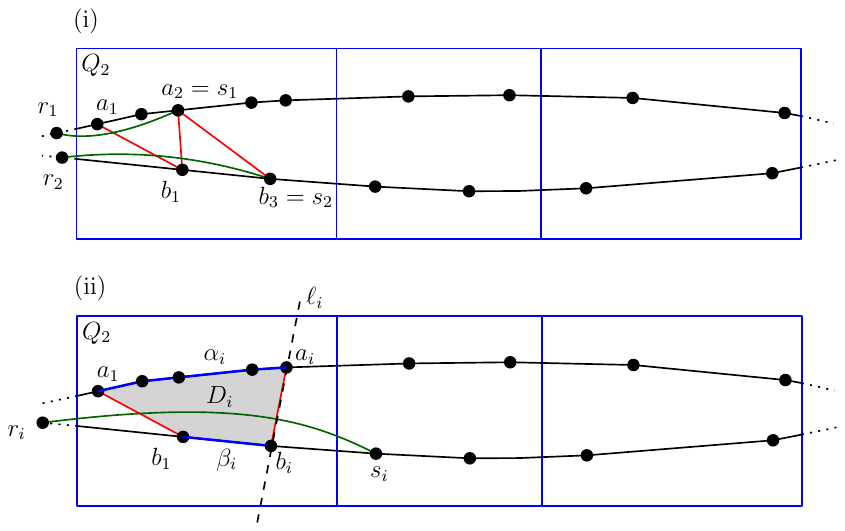}
\end{center}
\caption{The left-to-right sweep.
(i) The red segments are successive segments of the form $a_ib_i$, and the green edges are their blockers. In the first step, the blocker $e_1$ advances the upper endpoint, so $a_2=s_1$ and $b_2=b_1$; in the second step, the blocker $e_2$ advances the lower endpoint, so $a_3=a_2$ and $b_3=s_2$.
(ii) At step $i$, the paths $\alpha_i\subseteq A$ and $\beta_i\subseteq B$, together with $a_1b_1$ and $a_ib_i$, bound the region $D_i$. The line $\ell_i$ contains $a_ib_i$, and $H_i$ is the side of $\ell_i$ containing $D_i$. The blocker is $e_i=r_is_i$, where $r_i$ is the endpoint lying in $H_i$ and $s_i$ is the other endpoint.}
\label{fig:left-to-right-one}
\end{figure}

Starting from $a_1b_1$, recursively construct a sequence of segments
$a_1b_1,a_2b_2,\ldots,a_mb_m$ as follows. Suppose that $a_ib_i$ has been defined.
Since no edge of $\T$ joins $V(A)$ to $V(B)$ inside $Q_2$, the segment $a_ib_i$ is
not an edge of $\T$. Let $e_i\in E(\T)$ be a blocker of $a_ib_i$.

For $i=1$, the choice of $a_1b_1$ implies that $e_1$ has an endpoint strictly to
the left of $Q_2$; call that endpoint $r_1$, and call the other endpoint $s_1$.
For $i\ge2$, let $\alpha_i$ be the subpath of $A$ from $a_1$ to $a_i$, and let
$\beta_i$ be the subpath of $B$ from $b_1$ to $b_i$. Let $D_i$ be the closed region bounded by
$a_1b_1$, $\alpha_i$, $a_ib_i$, and $\beta_i$. Let $\ell_i$ be the line containing
$a_ib_i$, and let $H_i$ be the open halfplane bounded by $\ell_i$ that contains
interior points of $D_i$. Denote by $r_i$ the endpoint of $e_i$ that lies in
$H_i$, and by $s_i$ the other endpoint. 

If $s_i\in V(A)$, set $a_{i+1}=s_i$ and $b_{i+1}=b_i$. If $s_i\in V(B)$, set
$a_{i+1}=a_i$ and $b_{i+1}=s_i$. If $s_i\notin V(A)\cup V(B)$, the left-to-right sweep terminates.
As shown in the next lemma, whenever the sweep continues, one of the two endpoints
moves strictly forward along its chain. Since $P$ is finite, the process terminates
after finitely many steps. Let $a_mb_m$ be the last
segment obtained.

\begin{lem}\label{lem:left-sweep-properties}
Assume that no bridge exists in $Q_2$ and, after possibly exchanging $A$ and $B$,
that $R_1\cap V(A)\neq\emptyset$. For the left-to-right sweep defined above, the
following hold.
\begin{enumerate}[label=(\roman*)]
    \item For every $i\in\{1,\ldots,m\}$, the point $r_i$ lies strictly to the left of
    $Q_2$.
    \item There exists an index $k\in\{1,\ldots,m\}$ such that
    $s_k\in Q_1\cap(V(A)\cup V(B))$.
\end{enumerate}
\end{lem}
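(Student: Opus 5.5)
The plan is to prove (i) and (ii) together by induction on $i$. Alongside (i), I will carry the invariant that every $a_i$ and $b_i$ is a vertex of $A$ or $B$ with $x$-coordinate at most $x_v$. More precisely, the invariant is: every vertex that was advanced to, up to step $i$, lies in $R_1$, and $b_1$ satisfies $x_{b_1}\le x_v$.

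\emph{Part (i).} The base case $i=1$ is built into the definition of $r_1$. Recall why: $a_1$ is the leftmost vertex of $V(A)$ and $b_1$ is the leftmost vertex of $V(B)$. By Lemma~\ref{lem:local-chains}, $P\cap Q_2=V(A)\cup V(B)$. So the part of $Q_2$ to the left of $a_1b_1$ that lies in $\CH(P)$ contains no points of $P$ other than vertices of $A$ and $B$ preceding $a_1,b_1$, and there are none. Hence a blocker of $a_1b_1$ must have an endpoint strictly left of $Q_2$.

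For the inductive step, suppose $r_i\in Q_2$. Then $r_i\in V(A)\cup V(B)$. Since $r_i\in H_i$, convexity of the chains forces $r_i$ to be a vertex of $\alpha_i$ or $\beta_i$ other than $a_i,b_i$; it cannot equal $a_i$ or $b_i$, because $e_i$ crosses $a_ib_i$. I will rule this out using planarity of $\T$. Suppose the last advance was $s_{i-1}=a_i$; the other case is symmetric. Then $e_{i-1}$ joins $a_i$ to $r_{i-1}$, which lies strictly left of $Q_2$ by induction, and $e_{i-1}$ passes through $D_i$. Now consider $e_i=r_is_i$ crossing $a_ib_i$:
\begin{itemize}
\item If $r_i\in\alpha_i$ and $s_i\in V(B)$, then $e_i$ is a bridge, which is excluded.
\item If $r_i\in\alpha_i$ and $s_i\in V(A)$, then $e_i$ is a chord of $A$ that cuts off $a_i$. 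The edge $e_{i-1}$ leaves that cap to reach a point left of $Q_2$, so it must cross $e_i$.
\item If $r_i\in\beta_i$, then $e_i$ separates $a_i$ from the left exterior within the hull in the same way, or else it is a bridge.
\end{itemize}
In every case we reach a contradiction. The cleanest way to organize this is the invariant that the edges $e_1,\dots,e_{i-1}$ separate the vertices of $\alpha_i\cup\beta_i$ other than $a_i,b_i$ from the open segment $a_ib_i$ inside $\CH(P)\cap Q_2$. I expect making this planarity and separation argument rigorous to be the main obstacle.

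\emph{Part (ii).} Suppose, for contradiction, that no $s_k$ lies in $Q_1\cap(V(A)\cup V(B))$. I will show inductively that each $s_i$ lies in $R_1$, which keeps the invariant. By the invariant, one endpoint of $a_ib_i$ lies in $R_1$ and the other has $x$-coordinate at most $x_v$. So Observation~\ref{obs:metric-bounds}(2),(3) gives $|a_ib_i|<\frac{19}{5}L$, and by (i) $r_i$ lies strictly left of $Q_2$. This constrains $s_i$ as follows:
\begin{itemize}
\item If $s_i\in R_4$, then Observation~\ref{obs:metric-bounds}(6) gives $|e_i|>\frac{19}{5}L>|a_ib_i|$.
\item If $s_i$ lies strictly right of $Q_2$, then Observation~\ref{obs:metric-bounds}(7) gives $|e_i|>6L>|a_ib_i|$.
\item If $s_i$ lies strictly left of $Q_2$, then $e_i$ lies entirely left of $Q_2$ and cannot cross $a_ib_i\subseteq Q_2$.
\end{itemize}
The first two cases contradict the blocker property of Observation~\ref{obs:blocker}, so $s_i\in Q_2$. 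Then $s_i\in V(A)\cup V(B)$ by Lemma~\ref{lem:local-chains}, and by assumption $s_i\notin Q_1$, so $s_i\in R_1$.

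Consequently the sweep never terminates: termination requires $s_m\notin V(A)\cup V(B)$, which we have just excluded. But each step advances an endpoint strictly along its chain and $P$ is finite, so the sweep must terminate. This contradiction shows that some $s_k$ lies in $Q_1\cap(V(A)\cup V(B))$, proving (ii).
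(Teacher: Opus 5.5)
Your part~(ii) is correct and is essentially the paper's argument. The paper takes the largest $k$ with $a_k\in R_1$ and $x_{b_k}\le x_v$; you instead show by contradiction that every $s_i$ stays in $R_1$. Both use the same comparisons from Observation~\ref{obs:metric-bounds}(3), (6), (7). The gaps are in part~(i).

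First, the planarity step is not established. Your explicit cases use only $e_{i-1}$, which is the wrong edge in general. Suppose the last advance was on $A$ and $r_i\in\beta_i\setminus\{b_i\}$. Then $e_i$ (a chord of $B$ across $b_i$, say) does not separate $a_i$ from the left exterior, and nothing forces it to meet $e_{i-1}=r_{i-1}a_i$. Likewise, ``the other case is symmetric'' does not cover $r_i\in\alpha_i$ when the last advance was on $B$. Your bullets also never treat $s_i\notin Q_2$. The contradiction has to come from the edge that \emph{introduced} $b_i$ (respectively $a_i$). This edge exists, since $r_i\in\beta_i\setminus\{b_i\}$ forces $b_i\neq b_1$. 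The paper uses the edge introducing $b_j$ for the smallest $j$ with $x_{b_j}>x_{r_i}$. Your fallback separation invariant is the right statement, but you leave it unproved, and it is the heart of~(i). It follows quickly. Let $r_{j-1}b_i$ be the edge that introduced $b_i$. By induction $r_{j-1}$ is strictly left of $Q_2$, so this edge meets the convex set $\CH(P)\cap Q_2$ in a chord from $b_i$ to the left side of $Q_2$. The point $r_i$ lies strictly on one side of this chord and the open segment $a_ib_i$ strictly on the other. Hence the part of $e_i$ from $r_i$ to its crossing with $a_ib_i$ properly crosses $r_{j-1}b_i$, whatever $s_i$ is.

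Second, even granting this, you only obtain $r_i\notin Q_2$. Statement~(i) asserts that $r_i$ lies strictly to the \emph{left} of $Q_2$, and your part~(ii) depends on exactly that: it is needed for Observation~\ref{obs:metric-bounds}(6), (7) and for the case where both endpoints of $e_i$ are left of $Q_2$. Nothing in your argument rules out $r_i$ lying to the right of $Q_2$. The missing observation, which the paper makes, is that $\ell_i$ splits $\partial\CH(P)$ into two arcs. One arc runs through $\alpha_i$, the portion of the boundary left of $Q_2$, and $\beta_i$. The other runs through the rest of $A$, the portion right of $Q_2$, and the rest of $B$. $H_i$ is the side of the first arc, so every point of $P\cap H_i$ outside $Q_2$ is strictly left of $Q_2$. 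Applied to $s_i$, which lies strictly on the opposite side of $\ell_i$, the same observation gives the strict forward advance of the sweep. You use that advance for termination in~(ii) without proof, although it is part of what this lemma is meant to establish. Finally, your opening invariant that all advanced-to vertices lie in $R_1$ holds only under the contradiction hypothesis of~(ii). Your proof of~(i) does not use it, and it should not be presented as carried through~(i), which must hold for all $i\le m$.
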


\begin{proof}
We first prove (i). For $i=1$, the claim follows from the choice of $a_1b_1$ as a
leftmost segment from $V(A)$ to $V(B)$. If both endpoints of a blocker of $a_1b_1$
lie in $Q_2$, then the crossing with $a_1b_1$ would produce a segment from $V(A)$ to
$V(B)$ lying strictly to the left of $a_1b_1$, contradicting the choice of $a_1b_1$.

Now let $i\ge 2$, and assume the claim holds for all smaller indices. Every point of
$P\cap Q_2\cap H_i$ belongs to $D_i$: indeed, every point of $P\cap Q_2$ lies on
$A$ or on $B$, and the points of $A\cup B$ lying in $H_i$ are exactly on the already
traversed subpaths $\alpha_i$ and $\beta_i$.

We claim that neither endpoint of $e_i$ lies in $D_i$. Figure~\ref{fig:left-to-right-one}(ii) illustrates why this should be true. Suppose for contradiction
that one endpoint of $e_i$ lies in $D_i$. Then this endpoint lies on
$\alpha_i\cup\beta_i$. If it lies on $\beta_i$, choose the smallest index
$j\in\{2,\ldots,i\}$ such that $x_{b_j}>x_{r_i}$. Then
$x_{b_{j-1}}\le x_{r_i}<x_{b_j}$, so $b_j\neq b_{j-1}$. Hence at step $j-1$ the
$B$-endpoint advanced, and the blocker edge $e_{j-1}=r_{j-1}s_{j-1}$ satisfies
$s_{j-1}=b_j$. The edge $e_i$ starts on $\beta_i$ strictly to the left of $b_j$ and
crosses $a_ib_i$, while $e_{j-1}$ enters $Q_2$ from the left and ends at $b_j$.
These two edges cross, contradicting planarity of $\T$.

The same argument applies if the endpoint of $e_i$ lies on $\alpha_i$: choose the
smallest $j$ such that $x_{a_j}>x_{r_i}$. Then the edge that introduced $a_j$ crosses
$e_i$, again contradicting planarity. Thus neither endpoint of $e_i$ lies in
$D_i$.

Since $\alpha_i$ and $\beta_i$ are contained in $D_i$, it follows that
$s_i\notin V(\alpha_i)\cup V(\beta_i)$. Hence, if $s_i\in V(A)$, then $s_i$ lies on
the suffix of $A$ strictly after $a_i$, and so $x_{s_i}>x_{a_i}$. Similarly, if
$s_i\in V(B)$, then $s_i$ lies on the suffix of $B$ strictly after $b_i$, and
$x_{s_i}>x_{b_i}$.

If $r_i$ were contained in $Q_2$, then
$r_i\in P\cap Q_2\cap H_i$, and therefore $r_i\in D_i$, contradicting the
previous paragraph. Hence $r_i\notin Q_2$.

The line $\ell_i$ separates the two boundary paths of $\partial\ch(P)$ between
$a_i$ and $b_i$. By the definition of $H_i$, the boundary path contained in
$\overline{H_i}$ is the one passing through the left side of $Q_2$, whereas
the other boundary path passes through the right side. Therefore every vertex
of $P\cap H_i$ outside $Q_2$ lies strictly to the left of $Q_2$. Since
$r_i\in H_i$, the point $r_i$ lies strictly to the left of $Q_2$. This proves
(i).

We now prove (ii). The initial segment satisfies $a_1\in R_1$ and $x_{b_1}\le x_v$.
Let $k$ be the largest index such that $a_k\in R_1$ and $x_{b_k}\le x_v$. By (i),
$r_k$ lies strictly to the left of $Q_2$. We claim that
$s_k\in Q_1\cap(V(A)\cup V(B))$. An example is shown in Figure~\ref{fig:left-to-right-two}, where the largest such index is $k=4$.

\begin{figure}
\begin{center}
\includegraphics{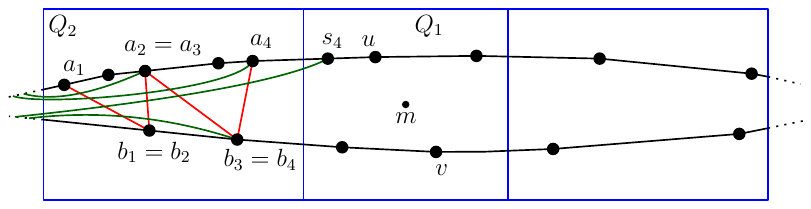}
\end{center}
\caption{If $a_4b_4$ is the last segment for which $a_i\in R_1$ and
$x_{b_i}\le x_v$, then $s_4\in Q_1$.}
\label{fig:left-to-right-two}
\end{figure}

Suppose not. The point $s_k$ cannot lie strictly to the left of $Q_2$, since $r_k$
also lies strictly to the left of $Q_2$, in which case the segment
$r_ks_k$ could not cross $a_kb_k\subseteq Q_2$. If $s_k\in R_1$, then the next segment would still satisfy
$a_{k+1}\in R_1$ and $x_{b_{k+1}}\le x_v$, contradicting the maximality of $k$. If
$s_k\in R_4$, then Observation~\ref{obs:metric-bounds}(6) gives
$|r_ks_k|>\frac{19}{5}L$, while Observation~\ref{obs:metric-bounds}(3) gives
$|a_kb_k|<\frac{19}{5}L$, contradicting that $r_ks_k$ is a blocker of
$a_kb_k$. If $s_k$ lies strictly to the right of $Q_2$, then
Observation~\ref{obs:metric-bounds}(7) gives $|r_ks_k|>6L$, again contradicting
$|a_kb_k|<\frac{19}{5}L$. Therefore
$s_k\in Q_1\cap(V(A)\cup V(B))$.
\end{proof}

In the previous lemma, we identified an edge $r_ks_k$ with one endpoint outside
$Q_2$ and the other endpoint in $Q_1$. The purpose of the next lemma is to
identify a second such edge. The key additional property is that its endpoint
inside $Q_1$ lies on the opposite hull chain from the endpoint of the first edge.

\begin{lem}\label{lem:right-sweep}
Assume that no bridge exists in $Q_2$. Let $a_1b_1,\ldots,a_mb_m$ be the sequence produced by the left-to-right sweep, and
let $k$ be the index given by Lemma~\ref{lem:left-sweep-properties}. Then the following
hold.
\begin{enumerate}[label=(\roman*)]
    \item If $s_k\in V(A)$, then $a_m\in R_4$ and $b_m\in R_1$, and there exists an
    edge $pq\in E(\T)$ such that $p$ lies strictly to the right of $Q_2$ and
    $q\in Q_1\cap V(B)$.
    \item If $s_k\in V(B)$, then $a_m\in R_1$ and $b_m\in R_4$, and there exists an
    edge $pq\in E(\T)$ such that $p$ lies strictly to the right of $Q_2$ and
    $q\in Q_1\cap V(A)$.
\end{enumerate}
\end{lem}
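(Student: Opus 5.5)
The plan is to treat case~(i), $s_k\in V(A)$; case~(ii) then follows by exchanging the roles of $A$ and $B$. The argument has three stages. First, control how the left-to-right sweep continues after step $k$. Second, identify what the terminating configuration must look like. Third, obtain the edge $pq$ from a mirror-image right-to-left sweep.

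\emph{Continuation of the sweep.} After step $k$ we have $a_{k+1}=s_k\in Q_1\cap V(A)$ and $b_{k+1}=b_k$. Lemma~\ref{lem:left-sweep-properties}(i) says every later blocker $e_i=r_is_i$ has $r_i$ strictly left of $Q_2$. Since $\T$ is plane, $e_i$ cannot cross $e_k=r_ks_k$. Suppose some later step advanced the $B$-endpoint to a vertex $s_i\in V(B)$ with $x_{s_i}$ beyond $R_1$. Then $e_i$ runs from the left exterior to the lower chain past $R_1$, and $e_k$ runs from the left exterior to the upper chain inside $Q_1$. I would argue, using the order in which the two chains cross the left side of $Q_2$ and the fact that $e_i$ must cross $a_ib_i$ with $a_i$ to the right of $s_k$, that $e_i$ and $e_k$ then cross. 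That contradicts planarity. So every $B$-advance after step $k$ stays in $R_1$, and $b_m\in R_1$ follows.

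\emph{Termination.} The sweep stops only when $s_m\notin V(A)\cup V(B)$, that is, when $s_m\notin Q_2$. Since $r_m$ is strictly left of $Q_2$ and $e_m$ crosses $a_mb_m\subseteq Q_2$, the point $s_m$ must be strictly right of $Q_2$. Then $|r_ms_m|>6L$ by Observation~\ref{obs:metric-bounds}(7).
\begin{itemize}
  \item If $a_m\notin R_4$, then $a_m\in Q_2\setminus R_4$ and $b_m\in R_1$. Observation~\ref{obs:metric-bounds}(4) then gives $|a_mb_m|<6L$, contradicting the blocker property.
  \item Hence $a_m\in R_4$ and $b_m\in R_1$.
\end{itemize}
This configuration also yields a long edge $e_m$ of $\T$ spanning $Q_2$ from left to right, and this edge is recorded for the next stage.

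\emph{The edge $pq$.} Since $a_m\in R_4$, we have $R_4\cap V(A)\neq\emptyset$. I will run the mirror-image right-to-left sweep, starting from the rightmost vertex of $A$ and the rightmost vertex of $B$. By the reflected versions of Lemma~\ref{lem:left-sweep-properties}, this sweep produces a blocker $p'q'$ with $p'$ strictly right of $Q_2$ and $q'\in Q_1\cap(V(A)\cup V(B))$. It remains to rule out $q'\in V(A)$. If that happened, the reflected form of the continuation and termination arguments above would give a final segment with $A$-endpoint in $R_1$ and $B$-endpoint in $R_4$, together with a long edge of $\T$ spanning $Q_2$ from right to left.

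\emph{Main obstacle.} I expect the hardest step to be showing that these two long spanning edges cross, or that one of them crosses $r_ks_k$ or $p'q'$. Both spanning edges traverse the whole strip $|x|\le 3L$, and inside $Q_2$ they pass between $A$ and $B$. One of them crosses a segment going from upper-right to lower-left, and the other crosses a segment going from upper-left to lower-right. I would compare their vertical order at $x=-3L$ and at $x=3L$ and show that the order swaps, which forces a crossing and contradicts planarity. Making this ordering argument rigorous is the delicate part, as is the similar planarity argument in the first stage.
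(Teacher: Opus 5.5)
\textbf{There is a genuine gap at the step you flag as the main obstacle, and the mechanism you propose for it cannot work.}

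The boundary $\partial\ch(P)$ is cyclically ordered as: $A$, the part left of $Q_2$, $B$, the part right of $Q_2$. Hence every edge of $T$ with one endpoint strictly left of $Q_2$ and the other strictly right of $Q_2$ separates all of $A$ from all of $B$. It therefore crosses \emph{every} $A$--$B$ segment, whatever its slope.

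So $r_ms_m$ and the terminal blocker of your mirrored sweep both cross both $a_mb_m$ and the mirrored terminal segment. Knowing which segment each edge was chosen to block tells you nothing about their vertical order. Two such chords can be disjoint and nested, so there is no forced swap at $x=\pm3L$. The other crossings you list are not forced either. For instance, whether $r_ms_m$ crosses $p'q'$ depends only on where $p'$ lies relative to $s_m$ on the right cap.

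\textbf{The missing idea is to use the sweep edge that introduced $a_m$ as a barrier.} Let $j_0$ be the least index with $a_{j_0}=a_m$. Since $a_1\in R_1$ and $a_m\in R_4$, we have $j_0\ge2$, and $r_{j_0-1}a_m\in E(T)$ with $r_{j_0-1}$ strictly left of $Q_2$ by Lemma~\ref{lem:left-sweep-properties}(i). Now take any edge from a point strictly right of $Q_2$ to a vertex of $A$ strictly left of $a_m$; in particular, to any vertex of $V(A)\cap Q_1$. Its endpoints interleave with $r_{j_0-1}$ and $a_m$ along $\partial\ch(P)$, so it crosses $r_{j_0-1}a_m$.

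This rules out $q'\in V(A)$ in your mirrored sweep immediately, with no comparison of long edges. With this fix your Stage~3 becomes a valid, fairly tidy alternative to the paper's route. The paper keeps $a_m$ fixed, sweeps only along $B$ from its rightmost vertex $z_1$, and splits into the cases $z_1\in Q_1$ and $z_1\in R_4$. It uses this same barrier edge in each case.

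\textbf{Your Stage~1 has a similar defect.} A later blocker $r_is_i$ with $s_i\in V(B)$ crosses $r_ks_k$ only if $r_i$ lies strictly between $A$ and $r_k$ on the left cap. Nothing prevents $r_i=r_k$, or $r_i$ lying on the $B$-side of $r_k$. So planarity with $r_ks_k$ does not confine the $B$-advances to $R_1$.

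That stage is unnecessary, however. The $A$-endpoints only advance, so $a_m$ lies at or beyond $s_k\in Q_1$ on $A$, and hence $a_m\notin R_1$. Since $|a_mb_m|\ge|r_ms_m|>6L$, both forms of Observation~\ref{obs:metric-bounds}(4) force one endpoint of $a_mb_m$ into $R_1$ and the other into $R_4$ (if both lay in $Q_1$, the segment would be even shorter). This gives $a_m\in R_4$ and $b_m\in R_1$ directly, exactly as in the paper, with no planarity argument.
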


\begin{proof}
Since the sweep terminates at $a_mb_m$, we have $s_m\notin V(A)\cup V(B)$. By
Lemma~\ref{lem:left-sweep-properties}(i), the point $r_m$ lies strictly to the left of
$Q_2$. The point $s_m$ cannot also lie strictly to the left of $Q_2$, since then the
edge $r_ms_m$ would lie entirely to the left of $Q_2$ and could not cross
$a_mb_m\subseteq Q_2$. Since $P\cap Q_2=V(A)\cup V(B)$, the point $s_m$ lies strictly
to the right of $Q_2$.

Thus Observation~\ref{obs:metric-bounds}(7) gives $|r_ms_m|>6L$. Since $r_ms_m$ is a
blocker of $a_mb_m$, we have $|a_mb_m|>6L$. If one endpoint of $a_mb_m$ lay
in $R_4$ and the other in $Q_2\setminus R_1$, then
Observation~\ref{obs:metric-bounds}(4) would give $|a_mb_m|<6L$. The symmetric
statement gives the same contradiction if one endpoint lay in $R_1$ and the other in
$Q_2\setminus R_4$. Hence one endpoint of $a_mb_m$ lies in $R_1$ and the other lies
in $R_4$; see Figure~\ref{fig:terminal}.

\begin{figure}
\begin{center}
\includegraphics{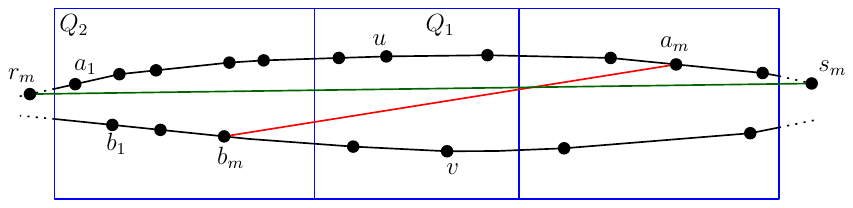}
\end{center}
\caption{The last blocker $r_ms_m$ must have its right endpoint to the right of $Q_2$; otherwise, if $s_m \in Q_2$, the process would continue. The length of $r_ms_m$ implies that $a_m\in R_4$, $b_m \in R_1$ or $a_m\in R_1$, $b_m\in R_4$.}
\label{fig:terminal}
\end{figure}

Assume first that $s_k\in V(A)$. The proof for the case $s_k\in V(B)$ is symmetric. Then
$s_k\in Q_1\cap V(A)$ by Lemma~\ref{lem:left-sweep-properties}(ii). From step $k$
onward, every later $A$-endpoint lies on the suffix of $A$ strictly after $s_k$.
Since $A$ is $x$-monotone and $s_k\in Q_1$, no later $A$-endpoint can belong to
$R_1$. Hence $a_m\notin R_1$, and therefore $a_m\in R_4$ and $b_m\in R_1$.

Let $z_1$ be the rightmost vertex of $V(B)$. Since $v\in V(B)$, we have
$x_{z_1}\ge x_v$, and hence $z_1\in Q_1\cup R_4$.

Let $j_0$ be the smallest index such that $a_{j_0}=a_m$. Then $j_0\ge 2$,
and the blocker edge of the preceding segment satisfies
$s_{j_0-1}=a_m$. By Lemma~\ref{lem:left-sweep-properties}(i), the other
endpoint $r_{j_0-1}$ lies strictly to the left of $Q_2$. Thus
$r_{j_0-1}a_m$ is an edge of $\T$ entering $Q_2$ from the left and ending
at $a_m$.

First suppose that $z_1\in Q_1$, as shown in Figure~\ref{fig:first-case}.
Since $a_mz_1$ is not an edge of $\T$, let $pq\in E(\T)$ be a blocker of
$a_mz_1$.

We claim that $pq$ has an endpoint strictly to the right of $Q_2$. The following argument is shown in Figure~\ref{fig:first-case}(ii). Suppose that both endpoints
of $pq$ lie in $Q_2$. Then, they cannot lie on different chains, as then $pq$ would be a bridge. Therefore, $p,q \in V(A)$ or $p,q\in V(B)$. Because $pq$ crosses $a_mz_1$, one endpoint of $pq$ would
lie to the right of the segment $a_mz_1$. If it lies on $V(B)$, then it
lies strictly to the right of $z_1$, contradicting the choice of $z_1$. If it lies on
$V(A)$, then it lies on the suffix of $A$ strictly after $a_m$, and the edge $pq$
crosses the edge $r_{j_0-1}a_m$, contradicting planarity. Thus $pq$ cannot have both
endpoints in $Q_2$.

If one endpoint of $pq$ lies strictly to the left of $Q_2$, then the other endpoint
cannot lie on $V(B)$: indeed, since $pq$ crosses $a_mz_1$, such an endpoint would
have to lie strictly to the right of $z_1$ along $B$, again contradicting the choice of
$z_1$ as the rightmost vertex of $V(B)$. If it lies
strictly to the right of $Q_2$, then Observation~\ref{obs:metric-bounds}(7) gives
$|pq|>6L$, while Observation~\ref{obs:metric-bounds}(3), in its symmetric form, gives
$|a_mz_1|<\frac{19}{5}L$, a contradiction. If the other endpoint lies on $V(A)$, then
it lies in $R_4$, and Observation~\ref{obs:metric-bounds}(6) gives
$|pq|>\frac{19}{5}L$, again contradicting
$|a_mz_1|<\frac{19}{5}L$. Therefore $pq$ has an endpoint strictly to the right of
$Q_2$.

\begin{figure}
\begin{center}
\includegraphics{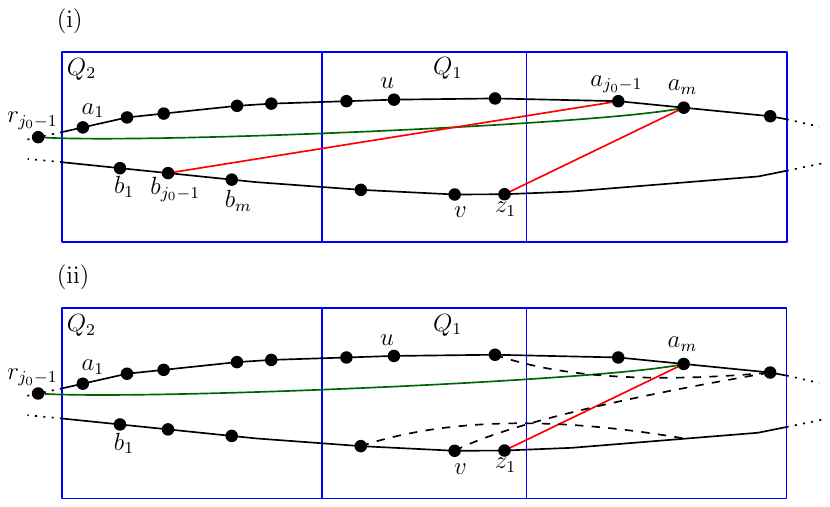}
\end{center}
\caption{(i) The case $z_1\in Q_1$. There exists a blocker $r_{j_0-1}a_m$, shown in green. (ii) The blocker of $a_mz_1$ cannot have both endpoints in $Q_2$, as shown by the three dashed segments. Such a blocker would have to either be a bridge, or violate planarity by crossing $r_{j_0-1}a_m$ or have an endpoint in $V(B)$ to the right of $z_1$.}
\label{fig:first-case}
\end{figure}

Let this right endpoint be $p$, as illustrated in Figure~\ref{fig:first-case-two}. The other endpoint cannot lie on $V(A)$, since then
$pq$ would cross the edge $r_{j_0-1}a_m$. Hence the other endpoint lies on $V(B)$.
Since $z_1$ is the rightmost vertex of $V(B)$ and $z_1\in Q_1$, this other endpoint
lies in $Q_1\cup R_1$. It cannot lie in $R_1$, because then
Observation~\ref{obs:metric-bounds}(6) gives $|pq|>\frac{19}{5}L$, while
$|a_mz_1|<\frac{19}{5}L$. Thus the other endpoint lies in $Q_1\cap V(B)$, proving
the desired statement in the case $z_1\in Q_1$.

\begin{figure}
\begin{center}
\includegraphics{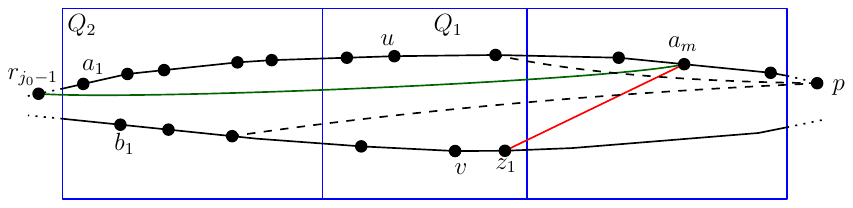}
\end{center}
\caption{The blocker from $p$ must have its other endpoint in $Q_1\cap V(B)$, as indicated by the two dashed segments. If its other endpoint is in $V(A)$, it would cross $r_{j_0-1}a_m$ violating planarity, and if it is not in $Q_1$, it would be too long compared to $a_mz_1$.}
\label{fig:first-case-two}
\end{figure}

It remains to consider the case $z_1\in R_4$. This case is illustrated in
Figure~\ref{fig:second-case}. We perform a right-to-left sweep along $B$,
keeping $a_m$ fixed. Starting from $a_mz_1$, recursively construct segments
$a_mz_1,a_mz_2,\ldots,a_mz_t$ as follows. Suppose that $a_mz_i$ has been
defined, with $z_i\in R_4$. Since $a_mz_i$ is not an edge of $\T$, let
$f_i\in E(\T)$ be a blocker of $a_mz_i$. Let $\gamma_i$ be the subpath of
$B$ from $z_i$ to $z_1$, and let $\Omega_i$ be the closed region bounded by
$a_mz_1$, $\gamma_i$, and $a_mz_i$. Let $p_i$ be the endpoint of $f_i$ on
the same side of the line through $a_mz_i$ as $\Omega_i$, and let $q_i$ be
the other endpoint.

\begin{figure}
\begin{center}
\includegraphics{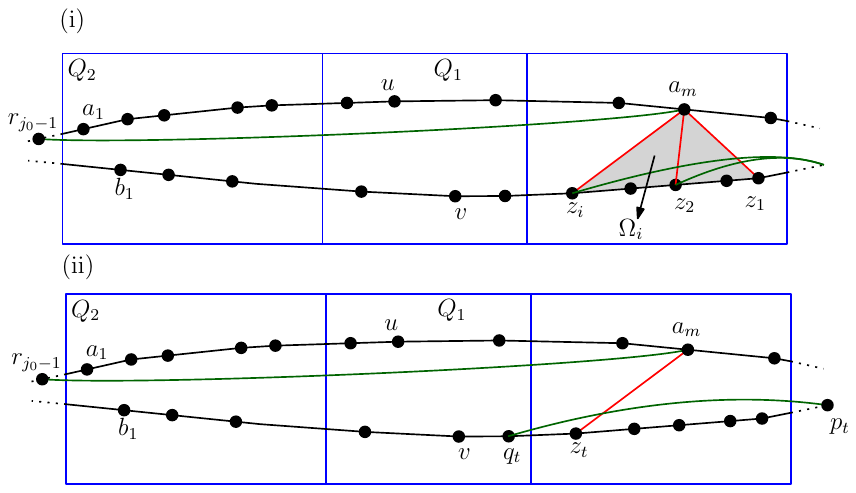}
\end{center}
\caption{The case $z_1\in R_4$. (i) The blockers of $a_mz_i$ must have an endpoint at the right of $Q_2$, while the other endpoint lies in $V(B)$. (ii) For the terminal segment $a_mz_t$ in the sweep, it must hold that $q_t\in Q_1\cap V(B)$. Note that $p_tq_t$ is the blocker of $a_mz_t$.}
\label{fig:second-case}
\end{figure}

The same argument used in Lemma~\ref{lem:left-sweep-properties}, reflected
horizontally, shows that $p_i$ lies strictly to the right of $Q_2$. Moreover,
if $q_i\in V(B)$, then $q_i$ lies on the suffix of $B$ strictly before
$z_i$, and hence
\[
x_{q_i}<x_{z_i}.
\]

If $q_i\in V(B)\cap R_4$, set $z_{i+1}=q_i$ and continue. Since the
$x$-coordinates strictly decrease and $P$ is finite, this process terminates.
Let $a_mz_t$ be the last segment obtained. Its blocker $p_tq_t$ satisfies
that $p_t$ lies strictly to the right of $Q_2$.

We claim that $q_t\in Q_1\cap V(B)$. If $q_t\in R_1$, then
Observation~\ref{obs:metric-bounds}(6) gives
$|p_tq_t|>\frac{19}{5}L$, whereas
Observation~\ref{obs:metric-bounds}(2) gives
$|a_mz_t|<\frac{19}{5}L$, a contradiction. If $q_t$ lies strictly to the
left of $Q_2$, then Observation~\ref{obs:metric-bounds}(7) gives
$|p_tq_t|>6L$, again contradicting
$|a_mz_t|<\frac{19}{5}L$. If $q_t\in R_4\cap V(B)$, then the sweep could
continue, contrary to the choice of $t$. Finally, $q_t$ cannot lie on
$V(A)$, because then $p_tq_t$ would cross the edge
$r_{j_0-1}a_m$ that introduced $a_m$. Thus
$q_t\in Q_1\cap V(B)$, and since $p_t$ lies to the right of $Q_2$, this
proves (i).

The proof of (ii) is symmetric, exchanging the roles of $A$ and $B$.
\end{proof}

We can now prove that a bridge must exist.

\begin{lem}\label{lem:bridge-exists}
Under the assumptions of Lemma~\ref{lem:local-chains}, $\T$ contains a bridge.
\end{lem}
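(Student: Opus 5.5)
We argue by contradiction, assuming that no bridge exists in $Q_2$. By Lemma~\ref{lem:R1-nonempty}, $R_1\cap P\neq\emptyset$. After exchanging $A$ and $B$ if necessary, we may assume $R_1\cap V(A)\neq\emptyset$ and run the left-to-right sweep.

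Lemma~\ref{lem:left-sweep-properties} then gives an index $k$ and an edge $e_k=r_ks_k\in E(\T)$ with $r_k$ strictly to the left of $Q_2$ and $s_k\in Q_1\cap(V(A)\cup V(B))$. Suppose $s_k\in V(A)$; the other case is symmetric via part (ii) of Lemma~\ref{lem:right-sweep}. Part (i) of Lemma~\ref{lem:right-sweep} supplies an edge $pq\in E(\T)$ with $p$ strictly to the right of $Q_2$ and $q\in Q_1\cap V(B)$. Set $x=s_k$ and $y=q$. Both lie in $Q_1$, so Observation~\ref{obs:metric-bounds}(1) gives $|xy|<\frac{11}{5}L$. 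Since $x\in V(A)$ and $y\in V(B)$ lie in $Q_2$ and no bridge exists, $xy\notin E(\T)$. Observation~\ref{obs:blocker} therefore yields a blocker $gh\in E(\T)$ that crosses $xy$ and satisfies $|gh|\le|xy|<\frac{11}{5}L$.

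We rule out each possible position of $gh$.

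\textbf{One endpoint outside $Q_2$.} Suppose an endpoint of $gh$ lies strictly to the left or to the right of $Q_2$. Because $gh$ crosses $xy\subseteq Q_1$, the segment $gh$ meets $Q_1$ at the crossing point $c$. Hence $|gh|\ge|gc|$, and $|gc|>\frac{11}{5}L$ by the horizontal separation used in Observation~\ref{obs:metric-bounds}(5). This contradicts $|gh|<\frac{11}{5}L$.

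\textbf{Both endpoints in $Q_2$.} Then both endpoints lie in $V(A)\cup V(B)$. They cannot lie on different chains, since $gh$ would then be a bridge. So both lie on $A$ or both lie on $B$.

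Say both lie on $A$. The chain $A$ is $x$-monotone, and the segment $xy$ separates the part of $Q_2$ near $A$ into a portion to the left of $x$ and a portion to the right of $x$. For $gh$ to cross $xy$, its endpoints must lie on $A$ on opposite sides of $x$, so one is strictly before $x$ and one strictly after $x$ along $A$. The convex region bounded by $A$ near $x$ then puts $gh$ below $x$. The edge $r_kx$ reaches $x$ from the left exterior of $Q_2$, so it separates the part of $\ch(P)\cap Q_2$ that lies before $x$ along $A$ and above $xy$ from the rest.

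I expect this planarity step to be the main obstacle. The plan is to show that a chord of $A$ from a vertex before $x$ to a vertex after $x$, passing below $x$, must cross $r_kx$ or $pq$. The argument: $r_kx$ and $pq$ together with $xy$ form a barrier separating the top-left region from the bottom-right region of $\ch(P)$, and the two endpoints of $gh$ lie on opposite sides of this barrier. The left endpoint of $gh$ lies before $x$ on $A$, hence in the part of $\ch(P)$ cut off by the chord $r_kx$, which lies on the same side as the left exterior.

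To make this rigorous, consider the boundary cycle of $\ch(P)$ in cyclic order. The chord $r_kx$ splits $\partial\ch(P)$ into two arcs. One arc contains the vertices of $A$ before $x$. The other arc contains the vertices of $A$ after $x$, all of $B\cap Q_1$, and in particular $y$. If $gh$ joins the two arcs, then $gh$ crosses $r_kx$, because in convex position two chords cross exactly when their endpoints interleave. If instead both endpoints of $gh$ lie on the second arc, the interleaving criterion applied to $xy$ shows that $gh$ cannot cross $xy$ unless an endpoint lies between $x$ and $y$ on the arc through $r_k$'s side. That arc is the first arc, which is a contradiction.

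The case where both endpoints of $gh$ lie on $B$ is symmetric, using the chord $pq$ at $y$ instead of $r_kx$. In every case we obtain a contradiction, so a bridge exists.
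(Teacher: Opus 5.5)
Your proof is correct and follows the same route as the paper. The left-to-right sweep and Lemma~\ref{lem:right-sweep} give $x\in Q_1\cap V(A)$ and $y\in Q_1\cap V(B)$, each joined by an edge to the left or right exterior of $Q_2$. A blocker of $xy$ that leaves $Q_2$ is too long by Observation~\ref{obs:metric-bounds}(5), and a blocker inside $Q_2$ is either a bridge or crosses one of those two edges.

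Your chord-interleaving argument is a rigorous version of the paper's one-line planarity step. One small slip: the arc of $xy$ on $r_k$'s side is not literally the first arc of $r_kx$, since it also contains $B$ left of $y$ and part of the left cap. Its vertices on $A$ do lie on that first arc, though, and that is all your argument uses.
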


\begin{proof}
Assume for contradiction that no bridge exists.
Then the sweep lemmas above apply.
Perform the left-to-right sweep, and let $k$ be the index from Lemma~\ref{lem:left-sweep-properties}.
Thus $s_k\in Q_1\cap(V(A)\cup V(B))$.

If $s_k\in V(A)$, set $x=s_k$.
By Lemma~\ref{lem:right-sweep}(i), there is an edge of $\T$ with one endpoint strictly to the right of $Q_2$ and the other endpoint $y\in Q_1\cap V(B)$.
If $s_k\in V(B)$, set $y=s_k$.
By Lemma~\ref{lem:right-sweep}(ii), there is an edge of $\T$ with one endpoint strictly to the right of $Q_2$ and the other endpoint $x\in Q_1\cap V(A)$.

In either case, we obtain points
\[
 x\in Q_1\cap V(A),\qquad y\in Q_1\cap V(B),
\]
together with an edge of $\T$ entering $Q_1$ from the left of $Q_2$ and ending at one of $x,y$, and an edge of $\T$ entering $Q_1$ from the right of $Q_2$ and ending at the other. This is depicted in Figure~\ref{fig:contradiction}.

\begin{figure}
\begin{center}
\includegraphics{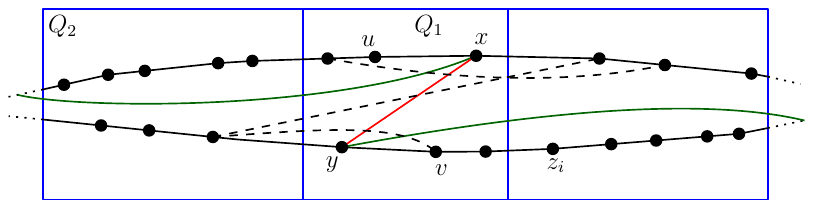}
\end{center}
\caption{Illustration for the proof of Lemma~\ref{lem:bridge-exists}. None of the three dashed segments can be a blocker of $xy$: such a segment would either be a bridge, or violate planarity by crossing one of the two green edges.}
\label{fig:contradiction}
\end{figure}

Consider the segment $xy$.
By Observation~\ref{obs:metric-bounds}(1), $|xy|<\frac{11}{5}L$.
If $xy\in E(\T)$, then $xy$ is a bridge, contrary to our assumption.
Hence $xy\notin E(\T)$, so by Observation~\ref{obs:blocker}, there is a blocker $e\in E(\T)$ crossing $xy$ with $|e|\le |xy|$.

We claim that $e$ cannot be contained entirely in $Q_2$.
Suppose otherwise.
Since $P\cap Q_2=V(A)\cup V(B)$, the endpoints of $e$ lie on $A\cup B$.
If $e$ has one endpoint in $V(A)$ and the other in $V(B)$, then $e$ is a bridge, again a contradiction.
Thus both endpoints of $e$ lie on the same chain.
Since $e$ crosses $xy$, these endpoints lie on opposite sides of $x$ along $A$, or on opposite sides of $y$ along $B$.
In the first case, $e$ crosses the already constructed edge of $\T$ incident to $x$; in the second case, it crosses the already constructed edge of $\T$ incident to $y$.
Both cases contradict planarity of $\T$.

Therefore $e$ has an endpoint outside $Q_2$. Since $e$ crosses the segment $xy$ at a point of $Q_1$, Observation~\ref{obs:metric-bounds}(5) shows that the subsegment from this exterior endpoint to the crossing point already has length greater than $\frac{11}{5}L$. Hence $|e|>\frac{11}{5}L$, contradicting $|e|\le |xy|<\frac{11}{5}L$.
This contradiction proves that a bridge exists.
\end{proof}

\begin{proof}[Proof of Theorem~\ref{thm:main}]
Fix $u,v\in P$, and set $L=|uv|$.

If $uv$ is a hull edge, then $uv\in E(\T)$, and hence
$d_\T(u,v)=L\le \kappaGT L$. Thus, assume that $u$ and $v$ are nonadjacent
vertices of $\ch(P)$.

By Corollary~\ref{cor:bounded-angle}, we may assume that
$\angle u>2\pi/3$ and $\angle v>2\pi/3$.

Apply Lemma~\ref{lem:rectangle}. If outcome~(1) holds, then one hull path
from $u$ to $v$ has length at most
\[
\sqrt{290}\,L<\kappaGT L,
\]
and we are done. Thus outcome~(2) holds.

If either hull path from $u$ to $v$ has length at most $\kappaGT L$, we are
again done. Hence we may assume that neither hull path has length at most
$\kappaGT L$.

By Lemma~\ref{lem:local-chains}, $\partial\ch(P)\cap Q_2$ consists of two
$x$-monotone convex chains $A$ and $B$, each joining the two short sides of
$Q_2$, with $u\in A$ and $v\in B$. Lemma~\ref{lem:bridge-exists} gives a
bridge between $A$ and $B$, and Lemma~\ref{lem:bridge-short-path} turns this
bridge into a $u$--$v$ path of length at most $\kappaGT L$.

Therefore,
\[
d_\T(u,v)\le \kappaGT |uv|.
\]
\end{proof}

\section{Conclusion}

For point sets in convex position, Theorem~\ref{thm:main} reduces the previously available
general-purpose constant by almost three orders of magnitude. A natural next question is to narrow the remaining gap between the lower bound of
$2.0268$ and the upper bound $\kappaGT<17.814$. It would also be interesting to determine whether the arguments presented can be extended to point sets with few interior points.

\bibliographystyle{plainurlnat}
\bibliography{GT}

@inproceedings{BoseLeeSmid2007,
  author    = {Prosenjit Bose and Aaron Lee and Michiel Smid},
  title     = {On Generalized Diamond Spanners},
  booktitle = {Algorithms and Data Structures: 10th International Workshop, WADS 2007},
  series    = {Lecture Notes in Computer Science},
  volume    = {4619},
  pages     = {325--336},
  publisher = {Springer},
  year      = {2007},
  doi       = {10.1007/978-3-540-73951-7_29}
}

@article{BoseSmid2013,
  author  = {Prosenjit Bose and Michiel Smid},
  title   = {On Plane Geometric Spanners: A Survey and Open Problems},
  journal = {Computational Geometry},
  volume  = {46},
  number  = {7},
  pages   = {818--830},
  year    = {2013},
  doi     = {10.1016/j.comgeo.2013.04.002}
}

@article{DumitrescuGhosh2016,
  author  = {Adrian Dumitrescu and Anirban Ghosh},
  title   = {Lower Bounds on the Dilation of Plane Spanners},
  journal = {International Journal of Computational Geometry \& Applications},
  volume  = {26},
  number  = {2},
  pages   = {89--110},
  year    = {2016},
  doi     = {10.1142/S0218195916500059},
  eprint  = {1509.07181},
  archivePrefix = {arXiv}
}

@article{LevcopoulosLingas1987,
  author  = {Christos Levcopoulos and Andrzej Lingas},
  title   = {On Approximation Behavior of the Greedy Triangulation for Convex Polygons},
  journal = {Algorithmica},
  volume  = {2},
  pages   = {175--193},
  year    = {1987},
  doi     = {10.1007/BF01840358}
}

@article{LevcopoulosLingas1992,
  author  = {Christos Levcopoulos and Andrzej Lingas},
  title   = {Fast Algorithms for Greedy Triangulation},
  journal = {BIT},
  volume  = {32},
  number  = {2},
  pages   = {280--296},
  year    = {1992},
  doi     = {10.1007/BF01994882}
}

@incollection{DasJoseph1989,
  author    = {Gautam Das and Deborah Joseph},
  title     = {Which Triangulations Approximate the Complete Graph?},
  booktitle = {Optimal Algorithms},
  series    = {Lecture Notes in Computer Science},
  volume    = {401},
  pages     = {168--192},
  publisher = {Springer},
  year      = {1989},
  doi       = {10.1007/3-540-51859-2_15}
}

@article{BuchinEtAl2026,
  author  = {Kevin Buchin and Joachim Gudmundsson and Antonia Kalb and Aleksandr Popov and Carolin Rehs and Andr{\'e} van Renssen and Sampson Wong},
  title   = {Oriented Spanners},
  journal = {Algorithmica},
  volume  = {88},
  year    = {2026},
  note    = {Article 14},
  doi     = {10.1007/s00453-025-01342-8}
}

\end{document}